\title{Three-element \minsol\ and Conservative \minhom}
\author{Hannes Uppman%
\thanks{Partially supported by the National Graduate School in Computer Science (CUGS), Sweden.}}
\institute{%
Department of Computer and Information Science, \\
Link{\"{o}}ping University, SE-581 83 Link{\"{o}}ping, Sweden \\
\email{hannes.uppman@liu.se}}
\def\clap#1{\hbox to 0pt{\hss#1\hss}}
\def\mathllap{\mathpalette\mathllapinternal}
\def\mathclap{\mathpalette\mathclapinternal}
\def\mathllapinternal#1#2{\llap{$\mathsurround=0pt#1{#2}$}}
\def\mathclapinternal#1#2{\clap{$\mathsurround=0pt#1{#2}$}}
\newcommand\myvcenter[1]{\ensuremath{\vcenter{\hbox{#1}}}}
\newcommand\st {\@ifnextchar{~}{s.t.}{s.t.\ }}
\newcommand\ie {\@ifnextchar{~}{i.e.}{i.e.\ }}
\newcommand\eg {\@ifnextchar{~}{e.g.}{e.g.\ }}
\newcommand\wlg{\@ifnextchar{~}{wlog}{wlog\ }}
\renewcommand\phi\varphi
\renewcommand\rho\varrho
\renewcommand\epsilon\varepsilon
\newcommand\propref[1]{Proposition~\ref{#1}}
\newcommand\lemref[1]{Lemma~\ref{#1}}
\newcommand\corref[1]{Corollary~\ref{#1}}
\newcommand\thmref[1]{Theorem~\ref{#1}}
\newcommand\thmrefs[1]{Theorems~\ref{#1}}
\newcommand\sectref[1]{Sect.~\ref{#1}}
\newcommand\sectsref[1]{Sects.~\ref{#1}}
\newcommand\startsectref[1]{Section~\ref{#1}}
\newcommand\domp[2]{$(#1,#2)$-dominating}
\newcommand\typegwtp{{\bf GWTP}}
\newcommand\typebsm{{\bf BSM}}
\newcommand\typegmc{{\bf GMC}}
\newcommand{\csp}{CSP}
\newcommand{\vcsp}{VCSP}
\newcommand\minhom{Min-Cost-Hom}
\newcommand\minsol{Min-Sol}
\newcommand\minones{Min-Ones}
\newcommand\class[1]{#1}
\newcommand\close[1] {\langle #1 \rangle}
\newcommand\wclose[1]{\langle #1 \rangle_w}
\newcommand\eclose[1]{\langle #1 \rangle_e}
\newcommand\binopers{\smash{\mathcal O_D^{(2)}}}
\newcommand\opers[1]{\mathcal O_D^{(#1)}}
\newcommand\qplus{\bbbq_{\smash{\ge 0}}}
\DeclareMathOperator*\argmin{\arg\,\min}
\DeclareMathOperator\ar{ar}
\DeclareMathOperator\pr{pr}
\DeclareMathOperator\supp{supp}
\DeclareMathOperator\sol{Sol}
\DeclareMathOperator\opt{Opt}
\DeclareMathOperator\optsol{Optsol}
\DeclareMathOperator\pol{Pol}
\DeclareMathOperator\fpol{f{}Pol}
\def\ba#1\ea{\begin{align*}#1\end{align*}}
\def\bna#1\ena{\begin{align}#1\end{align}}
\newcommand\bi{\begin{itemize}}
\newcommand\ei{\end{itemize}}
\newcommand\be{\begin{enumerate}}
\newcommand\ee{\end{enumerate}}
\newcommand\ssum[1]{\sum_{\mathclap{#1}}}
\newcommand\labels{%
\put(-5,-0.5){\scriptsize{a}}
\put(-5,8){\scriptsize{b}}
\put(-5,17.5){\scriptsize{c}}
\put(11,-0.5){\scriptsize{a}}
\put(11,8){\scriptsize{b}}
\put(11,17.5){\scriptsize{c}}
}
\newcommand\mypic[1]{\myvcenter{%
\begin{picture}(22,20)(-6,0)#1\labels\end{picture}}}
\newcounter{tmpb}
\newcounter{tmpa}
\newcommand\makeline[2]{
\setcounter{tmpb}{#2}  \addtocounter{tmpb}{-#1} 
\setcounter{tmpa}{#1}  \addtocounter{tmpa}{#1}
\addtocounter{tmpa}{#1}\addtocounter{tmpa}{#1}
\addtocounter{tmpa}{#1}\addtocounter{tmpa}{#1}
\addtocounter{tmpa}{#1}\addtocounter{tmpa}{#1}
\addtocounter{tmpa}{#1}\addtocounter{tmpa}{#1}  
\put(0,\value{tmpa}){\line(1,\value{tmpb}){10}}
}
\newcommand\mysmallpic[5]{%
\myvcenter{\begin{picture}(0 ,10)(0,0)  \end{picture}}^{#1}_{#2}
\myvcenter{\begin{picture}(10,10)(0,0)#5\end{picture}}^{\;\!#3}_{\;\!#4}
}
\newcommand\smallcrosspic[4]{\mysmallpic{#1}{#2}{#3}{#4}{
\put(0,0){\line(1,1){10}}
\put(0,10){\line(1,-1){10}}
}}
\newcommand\smallmispic[4]{\mysmallpic{#1}{#2}{#3}{#4}{
\put(0,0){\line(1,1){10}}
\put(0,10){\line(1,-1){10}}
\put(0,10){\line(1,0){10}}
}}
\newcommand\smallflipmispic[4]{\mysmallpic{#1}{#2}{#3}{#4}{
\put(0,0){\line(1,1){10}}
\put(0,10){\line(1,-1){10}}
\put(0,0){\line(1,0){10}}
}}
\newcommand\smallallpic[4]{\mysmallpic{#1}{#2}{#3}{#4}{
\put(0,0){\line(1,0){10}}
\put(0,0){\line(1,1){10}}
\put(0,10){\line(1,-1){10}}
\put(0,10){\line(1,0){10}}
}}
\newcommand\upasymbol{{\scriptstyle\uparrow}}
\newcommand\doasymbol{{\scriptstyle\downarrow}}
\newcommand\ddsymbol{{\scriptstyle\diamondsuit}}
\newcommand\upa[2]{\mathop\upasymbol\limits_{\mathclap{\vphantom{c}\smash{#2}}}^{\mathclap{\vphantom{c}\smash{#1}}}}
\newcommand\doa[2]{\mathop\doasymbol\limits_{\mathclap{\vphantom{c}\smash{#2}}}^{\mathclap{\vphantom{c}\smash{#1}}}}
\newcommand\dd[3]{\mathop{\ddsymbol_{#3}}\limits_{\mathclap{{#2}_{#3}}}^{\mathclap{{#1}_{#3}}}}
\newcommand\domps[1]{\triangleright_{#1}}
\begin{document}

\maketitle

\begin{abstract}
Thapper and \v{Z}ivn\'{y} [STOC'13] recently classified the complexity of \vcsp\ for all finite-valued constraint languages. 
However, the complexity of {\vcsp}s for constraint languages that are not finite-valued remains poorly understood.
In this paper we study the complexity of two such {\vcsp}s, namely \minhom\ and \minsol.
We obtain a full classification for the complexity of \minsol\ on domains that contain at most three elements and for the complexity of conservative \minhom\ on arbitrary finite domains.
Our results answer a question raised by Takhanov [STACS'10, COCOON'10].
\end{abstract}

\section {Introduction}
The \emph{valued constraint satisfaction problem} (\vcsp)
is a very broad framework in which many combinatorial optimisation problems can be expressed.
A \emph{valued constraint language} is a fixed set of cost functions from powers of a \emph{finite domain}.
An instance of \vcsp\ for some give constraint language is then a weighted sum of cost functions from the language.
The goal is to minimise this sum.
On the two-element domain the complexity of the problem is known for every constraint language~\cite{softcsp}.
Also for every language containing all $\{0,1\}$-valued unary cost functions the complexity is known~\cite{kolmogorov:zivny:conservativevcsp}.
In a recent paper Thapper and \v{Z}ivn\'{y}~\cite{thapper:zivny:fvcsp} managed to classify the complexity of \vcsp\ for all finite-valued constraint languages.
However, {\vcsp}s with other types of languages remains poorly understood.

In this paper we study the complexity of the \emph{(extended) minimum cost homomorphism problem} (\minhom) and the \emph{minimum solution problem} (\minsol).
These problems are both {\vcsp}s with special types of languages in which all non-unary cost-functions are crisp ($\{0,\infty\}$-valued).
Despite this rather severe restriction the frameworks allow many natural combinatorial optimisation problems to be expressed.
\minsol\ does \eg generalise a large class of bounded integer linear programs.
It may also be viewed as a generalisation of the problem \minones~\cite{minones} to larger domains.
The problem \minhom\ is even more general and contains \minsol\ as a special case.

The problem \minsol\ has received a fair bit of attention in the literature and has \eg had its complexity fully classified for all graphs of size three~\cite{maxsolgraph} and for all so-called homogeneous languages~\cite{maxonesgen}.
For more information about \minsol\ see~\cite{intromaxsol} and the references therein.
The ``unextended version'' of \minhom\ was introduced in~\cite{minhom:1} motivated by a problem in defence logistics.
It was studied in a series of papers before it was completely solved in~\cite{rustem:minhom}.
The more general version of the problem which we are interested in was introduced in~\cite{rustem:minsol}.%
\footnote{The definition in~\cite{rustem:minsol} is slightly more restrictive than the one we use.
Also the notation differs; what we denote \minhom$(\Gamma,\Delta)$ is in~\cite{rustem:minsol} referred to as $MinHom_\Delta(\Gamma)$.}

\bigskip
\noindent
{\bf Methods and Results.}
We obtain a full classification of the complexity of \minsol\ on domains that contain at most three elements.
The tractable cases are given by languages that can be solved by a certain linear programming formulation~\cite{thapper:zivny:lp} and a new class that is inspired by, and generalises, languages described in~\cite{rustem:minhom,rustem:minsol}.
A precise classification is given by \thmref{thm:minsol:2}.
For conservative \minhom\ (\ie \minhom\ with languages containing all unary crisp cost functions)
an almost complete classification (for arbitrary finite domains) was obtained by Takhanov~\cite{rustem:minsol}.
We are able to remove the extra conditions needed in~\cite{rustem:minsol} and provide a full classification for this problem.
This answers a question raised in~\cite{rustem:minhom,rustem:minsol}.
The main mathematical tools used througout the paper are from the so-called algebraic approach, see \eg~\cite{alg:3,alg:1},
and its extensions to optimisation problems~\cite{alg:opt:1,alg:opt:2}.
Following~\cite{thapper:zivny:fvcsp} we also make use of Motzkin's Transposition Theorem from the theory of linear equations.

The rest of the paper is organised as follows.
\startsectref{s:prelim} contains needed concepts and results from the literature and
\sectref{s:contrib} holds the description of our results.
Proofs of theorems in \sectref{s:contrib} are given in
\sectsref{s:proof:1},~\ref{s:proof:2} and~\ref{s:proof:3}.
One of our theorems is proved with the help of a fairly lengthy case analysis.
The proofs of this result and two supporting lemmas are collected in three appendices.

\section {Preliminaries}
\label{s:prelim}
For a set $\Gamma$ of finitary relations on a finite set $D$ (the domain),
and a finite set $\Delta$ (referred to as the domain valuations) of functions $D \to \qplus \cup \{ \infty\}$,
we define \minhom$(\Gamma,\Delta)$ as the following optimisation problem.
\begin{description}
\item[Instance:] A triple $(V,C,w)$ where
\bi
\item
$V$ is a set of variables,
\item
$C$ is a set of $\Gamma$-allowed constraints, \ie a set of pairs $(s,R)$ where the constraint-scope $s$ is a tuple of variables, and the constraint-relation $R$ is a member of $\Gamma$ of the same arity as $s$,
and
\item
$w$ is a weight function $V \times \Delta \to \qplus$.
\ei
\item[Solution:] A function $\phi : V \to D$ \st for every $(s,R) \in C$ it holds that $\phi(s) \in R$, where $\phi$ is applied component-wise.
\item[Measure:] The measure of a solution $\phi$ is $m(\phi) = \sum_{v \in V} \sum_{\nu \in \Delta} w(v,\nu) \nu( \phi(v) )$.
\end{description}
The objective is to find a solution $\phi$ that minimises $m(\phi)$.

The problem \minsol$(\Gamma,\nu)$, which we define only for injective functions $\nu : D \to \qplus$, is the problem \minhom$(\Gamma,\{\nu\})$.
The ``regular'' \emph{constraint satisfaction problem} (\csp) can also be defined through \minhom;
an instance of \csp$(\Gamma)$ is an instance of \minhom$(\Gamma,\emptyset)$, and the objective is to determine if any solution exists.

We will call the pair $(\Gamma,\Delta)$ a \emph{language} (or a \minhom-language).
The language \smash{$(\Gamma,\{\nu\})$ is written $(\Gamma,\nu)$}.
For an instance $I$ we use $\opt(I)$ for the measure of an optimal solution (defined only if a solution exists),
$\sol(I)$ denotes the set of all solutions and $\optsol(I)$ the set of all optimal solutions.
We define $0\,\infty = \infty\,0 = 0$ and for all $x \in \qplus \cup \{ \infty\}$, $x \le \infty$ and $x + \infty = \infty + x = \infty$.
The $i$:th projection operation will be denoted $\pr_i$.
We define $\smash{\binom A 2} = \{ \{x,y\} \subseteq A : x \ne y \}$.
$\opers m$ is used for the set of all $m$-ary operations on $D$.
For binary operations $f$, $g$ and $h$ we define $\overline f$ through $\overline f(x,y) = f(y,x)$ and $f[g,h]$ through $f[g,h](x,y) = f(g(x,y),h(x,y))$.
A $k$-ary operation $f$ on $D$ is called \emph{conservative} if $f(x_1,\dots,x_k) \in \{x_1,\dots,x_k\}$ for every $x_1,\dots,x_k \in D$.
A ternary operation $m$ on $D$ is called \emph{arithmetical} on $B \subseteq \binom D 2$ if for every $\{a,b\} \in B$ the function $m$ satisfies $m(a,b,b)=m(a,b,a)=m(b,b,a)=a$.

\bigskip
\noindent
{\bf Polymorphisms.}
Let $(\Gamma,\Delta)$ be a language on the domain $D$.
By $\Gamma^c$ we denote $\Gamma$ enriched with all constants, \ie $\Gamma \cup \{ \{c\} : c \in D \}$.
An operation $f : D^m \to D$ is called a \emph{polymorphism} of $\Gamma$ if for every $R \in \Gamma$ and every sequence $t^1,\dots,t^m \in R$ it holds that $f(t^1,\dots,t^m)\in R$  where $f$ is applied component-wise.
The set of all polymorphisms of $\Gamma$ is denoted $\pol(\Gamma)$.
A function $\omega : \smash{\opers k} \to \qplus$ is a $k$-ary \emph{fractional polymorphism}~\cite{alg:opt:1} of $(\Gamma,\Delta)$ if 
\ba
&\vphantom{\sum_{.}}\smash{
 \ssum{g\in \opers k} \omega(g) = 1 \quad \text{and} \quad \ssum{g\in \opers k} \omega(g) \nu(g(x_1,\dots, x_k)) \le \smash{ \frac{1}{k} \sum_{i=1}^k \nu(x_i) }
}
\ea
holds for every $\nu \in \Delta$ and every $x_1,\dots,x_k \in D$, and $\omega(g) = 0$ if $g \not\in \pol(\Gamma)$.
For a $k$-ary fractional polymorphism $\omega$ we let $\supp(\omega) = \{ g \in \smash{\opers k} : \omega(g) > 0 \}$.
The set of all fractional polymorphisms of $(\Gamma,\Delta)$ is denoted $\fpol(\Gamma,\Delta)$.

\bigskip
\noindent
{\bf Min-cores.}
The language $(\Gamma,\Delta)$ is called a \emph{min-core}~\cite{mincore:2} if there is no non-surjective unary $f \in \pol(\Gamma)$ for which $\nu(f(x)) \le \nu(x)$ holds for every $x \in D$ and $\nu \in \Delta$.
The language $(\Gamma',\Delta')$ is a min-core of $(\Gamma,\Delta)$ if $(\Gamma',\Delta')$ is a min-core and
$(\Gamma,\Delta)|_{f(D)} = (\Gamma',\Delta')$ for some unary $f \in \pol(\Gamma)$ satisfying $\nu(f(x)) \le \nu(x)$ for every $x \in D$ and $\nu \in \Delta$.
The reason why we care about min-cores is the following result~\cite{mincore:2}.%
\footnote{
The results in~\cite{mincore:2} are stated for a slightly more restricted problem than ours.
It is however not hard to see that the results transfer to our setting.
}
\begin{theorem}
Let $(\Gamma',\Delta')$ be a min-core of $(\Gamma,\Delta)$.
If \minhom$(\Gamma',\Delta')$ is \class{NP}-hard (in \class{PO}), then \minhom$(\Gamma,\Delta)$ is \class{NP}-hard (in \class{PO}).
\end{theorem}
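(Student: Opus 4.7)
The plan is to establish polynomial-time cost-preserving reductions in both directions between \minhom$(\Gamma,\Delta)$ and \minhom$(\Gamma',\Delta')$. Let $f \in \pol(\Gamma)$ be the non-surjective unary polymorphism witnessing $(\Gamma',\Delta') = (\Gamma,\Delta)|_{f(D)}$, so that $\nu(f(x)) \le \nu(x)$ for every $x \in D$ and $\nu \in \Delta$.

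For the upward transfer of \class{NP}-hardness, given an instance $I' = (V, C', w')$ of \minhom$(\Gamma',\Delta')$, I would lift it to an instance $I = (V, C, w)$ of \minhom$(\Gamma,\Delta)$ on the same variable set by replacing each constraint relation $R|_{f(D)} \in \Gamma'$ with a chosen preimage $R \in \Gamma$ and transferring weights via the surjection $\nu \mapsto \nu|_{f(D)}$ from $\Delta$ onto $\Delta'$. For the upward transfer of tractability I would use the mirror reduction: starting from an instance of \minhom$(\Gamma,\Delta)$, restrict each constraint relation to $f(D)$ and transfer weights analogously.

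The single observation driving both reductions is that $\phi \mapsto f \circ \phi$ maps $\sol(I)$ into $\sol(I')$ without increasing the measure. Namely, if $\phi \in \sol(I)$ and $\phi' := f \circ \phi$, then $\phi'(s) = f(\phi(s)) \in R$ because $f$ is a polymorphism of $\Gamma$, and since $\phi'$ takes values in $f(D)$ this gives $\phi'(s) \in R|_{f(D)}$; the bound $\nu(f(x)) \le \nu(x)$ then forces $m(\phi') \le m(\phi)$. Conversely, any $\phi' \in \sol(I')$ is automatically in $\sol(I)$ with the same measure, since $R|_{f(D)} \subseteq R$ and $\nu$ agrees with $\nu|_{f(D)}$ on $f(D)$. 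Combining these two facts yields $\opt(I) = \opt(I')$ for each reduction and allows an optimum on one side to be converted into an optimum on the other in polynomial time.

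No serious obstacle is expected. The only care required is that distinct valuations $\nu_1, \nu_2 \in \Delta$ may have identical restrictions to $f(D)$, so the correspondence $\Delta \to \Delta'$ is merely surjective rather than bijective; this is handled by choosing a section, or equivalently by summing weights along fibres, and does not affect polynomial-time equivalence.
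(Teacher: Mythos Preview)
Your argument is correct and is the standard one. The paper itself does not prove this theorem; it is quoted from~\cite{mincore:2} with only a footnote remarking that the result carries over to the present setting, so there is no in-paper proof to compare against. Your outline is exactly how one verifies that transfer: the retraction $\phi \mapsto f\circ\phi$ together with $R|_{f(D)}\subseteq R$ shows $\opt(I)=\opt(I')$ in both reductions, and aggregating weights along the fibres of $\Delta\twoheadrightarrow\Delta'$ is the right bookkeeping.

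One cosmetic remark: the paper's definition of ``$(\Gamma',\Delta')$ is a min-core of $(\Gamma,\Delta)$'' allows $f$ to be \emph{any} unary polymorphism with the cost-nonincreasing property (in particular the identity, if $(\Gamma,\Delta)$ is already a min-core), not necessarily a non-surjective one. Your proof never actually uses non-surjectivity, so this is harmless.
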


\bigskip
\noindent
{\bf Expressive Power and Polynomial-time Reductions.}
A relation $R$ is said to be \emph{weighted pp-definable} in $(\Gamma,\Delta)$ if there is an instance $I=(V,C,w)$ of \minhom$(\Gamma,\Delta)$ \st
$R = \{ (\phi(v_1),\dots,\phi(v_n)) : \phi \in \optsol(I) \}$ for some $v_1,\dots,v_n \in V$.
We use $\wclose{\Gamma,\Delta}$ to denote the set of all relations that is weighted pp-definable in $(\Gamma,\Delta)$.
Similarly $R$ is said to be \emph{pp-definable} in $\Gamma$ if there is an instance $I=(V,C)$ of \csp$(\Gamma)$ \st
$R = \{ (\phi(v_1),\dots,\phi(v_n)) : \phi \in \sol(I) \}$ for some $v_1,\dots,v_n \in V$.
$\close{\Gamma}$ is used to denote the set of all relations that are pp-definable in $\Gamma$.
A cost function $\nu : D \to \qplus \cup \{\infty\}$ is called \emph{expressible} in $(\Gamma,\Delta)$ if there is an instance $I=(V,C,w)$ of \minhom$(\Gamma,\Delta)$ and $v \in V$ \st $\nu(x) = \min \{ m(\phi) : \phi \in \sol(I), \phi(v)=x \}$ if $\nu(x)<\infty$ and 
$\min \{ m(\phi) : \phi \in \sol(I), \phi(v)=x \} = \infty$ or $\{ \phi \in \sol(I) : \phi(v)=x \} = \emptyset$ if $\nu(x)=\infty$.
The set of all cost functions expressible in $(\Gamma,\Delta)$ is denoted $\eclose{\Gamma,\Delta}$.
What makes all these closure operators interesting is the following result, see \eg~\cite{alg:opt:1,softcsp,fourel}.
\begin{theorem}
Let $\Gamma' \subseteq \wclose{\Gamma,\Delta}$ and $\Delta' \subseteq \eclose{\Gamma,\Delta}$ be finite sets.
Then, \minhom$(\Gamma',\Delta')$ is polynomial-time reducible to \minhom$(\Gamma,\Delta)$.
\end{theorem}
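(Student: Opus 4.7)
The plan is to construct a polynomial-time many-one reduction by gadget substitution. Since $\Gamma'$ and $\Delta'$ are finite, fix once and for all, for every $R \in \Gamma'$, a witnessing instance $I_R=(V_R,C_R,w_R)$ of \minhom$(\Gamma,\Delta)$ together with distinguished variables $v^R_1,\dots,v^R_{\ar R}$ satisfying $R=\{(\phi(v^R_1),\dots,\phi(v^R_{\ar R})):\phi\in\optsol(I_R)\}$, and for every $\nu'\in\Delta'$ fix an instance $I_{\nu'}$ with distinguished variable $v_{\nu'}$ realising the expressibility of $\nu'$. Given an input $I'=(V',C',w')$ of \minhom$(\Gamma',\Delta')$, I would build the target instance $I$ by taking a disjoint copy of $I_R$ for every constraint $(s,R)\in C'$ (identifying the $i$-th distinguished variable with the $i$-th entry of $s$) and a disjoint copy of $I_{\nu'}$ for every pair $(v,\nu')$ with $w'(v,\nu')>0$ (identifying $v_{\nu'}$ with $v$ and scaling its weights by $w'(v,\nu')$), with all other internal variables renamed to be fresh. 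Since each gadget has constant size, $I$ is polynomial in $|I'|$.

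Correctness on the expressibility side is immediate: each $I_{\nu'}$-copy, optimised subject to $\phi(v)=x$, contributes exactly $w'(v,\nu')\cdot\nu'(x)$ to the measure of $I$, so these terms reproduce the $\Delta'$-part of the measure of $I'$. The delicate point is the pp-definition side: $R$ only captures projections of \emph{optimal} solutions of $I_R$, so a suboptimal assignment inside an $I_R$-copy could in principle realise a tuple outside $R$ by paying extra gadget cost. I would address this by multiplying every weight inside each $I_R$-copy by a common factor $M$. For each $R\in\Gamma'$ and each tuple $t\notin R$, the minimum measure of $I_R$ with its distinguished variables equal to $t$ strictly exceeds $\opt(I_R)$; let $\delta>0$ be the minimum of these finitely many positive gaps. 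Choosing $M>\delta^{-1}\cdot\sum_{v,\nu'} w'(v,\nu')\cdot U$, with $U$ bounding the finite values of every $\nu'\in\Delta'$, forces every $I_R$-copy to an optimum at any optimum of $I$, and hence its projection lies in $R$.

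Once $M$ is fixed, the measures of corresponding solutions of $I$ and $I'$ differ by the additive constant $M\cdot\sum_{(s,R)\in C'}\opt(I_R)$, so optima correspond after restriction to $V'$. The main obstacle is the scaling step: verifying that $\delta>0$ and that $M$ has polynomially bounded description. Both follow from the finiteness of $\Gamma'$ and $\Delta'$ together with the fact that all parameters of the fixed gadgets are rationals of bounded size, but carefully laying out these estimates is what separates this proposal from a complete proof.
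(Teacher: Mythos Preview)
The paper does not give its own proof of this theorem; it is quoted as a known result with references to~\cite{alg:opt:1,softcsp,fourel}. Your gadget-substitution argument is precisely the standard proof of such closure results and is correct. In particular, you have correctly isolated the one nontrivial point peculiar to \emph{weighted} pp-definitions---that $R$ records only projections of \emph{optimal} solutions of $I_R$, so a large multiplier $M$ on the relational gadgets is required to prevent a global optimum of $I$ from ``cheating'' through a suboptimal gadget copy---and your bound $M\delta>\sum_{v,\nu'} w'(v,\nu')\,U$ works because all weights and cost values are nonnegative.

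The loose ends you flag (bit-size of $M$, strict positivity of $\delta$) are indeed routine: $\delta$ is the minimum of finitely many strictly positive rationals determined by the fixed gadgets, and the sum $\sum w'(v,\nu')\,U$ has description length polynomial in the input. One further point worth making explicit in a complete write-up is the infeasible case: if $I'$ has no solution, then for every assignment to $V'$ some $I_R$-copy either has no feasible extension or is strictly suboptimal, so either $I$ is infeasible or $\opt(I)\ge M(K+\delta)$ (where $K=\sum_{(s,R)\in C'}\opt(I_R)$), which is separated from any value attainable when $I'$ is feasible. The handling of $\infty$-values of $\nu'\in\Delta'$ is also automatic from the definition of expressibility, since the gadget $I_{\nu'}$ then either hard-forbids or infinitely penalises the offending domain element.
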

This of course also means that if $\Gamma' \subseteq \wclose{\Gamma,\Delta}$ is finite,
then \minhom$(\Gamma' \cup \Gamma,\Delta)$ is polynomial-time reducible to \minhom$(\Gamma,\Delta)$.

We will often use bipartite-graph-representations for relations, \eg $\smallmispic a b b c = \{(a,b), (a, c), (b,b)\}$.
Finally we recall a classic result, see \eg~\cite[p.~94]{schrijver}, about systems of linear equations that will be of great assistance.
\begin{theorem}[Motzkin's Transposition Theorem]
\label{thm:motzkin}
For any
$A \in \bbbq^{m \times n}$,
$B \in \bbbq^{p \times n}$,
$b \in \bbbq^{m}$ and
$c \in \bbbq^{p}$, exactly one of the following holds:
\bi
\item $Ax \le b$, $Bx < c$ for some $x \in \bbbq^n$
\item $A^Ty + B^Tz = 0$ and ($b^T y + c^T z < 0$ or $b^T y + c^T z = 0$ and $z \ne 0$) for some $y \in \qplus^m$ and $z \in \qplus^p$
\ei
\end{theorem}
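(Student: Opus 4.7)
The plan is to establish the dichotomy in two steps: first the routine direction that both alternatives cannot simultaneously hold, then the substantive direction that at least one must hold.

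For the routine direction, I would suppose that some $x$ satisfies $Ax \le b$, $Bx < c$ while some $y, z \ge 0$ satisfy $A^T y + B^T z = 0$. Taking the inner product with $x$ yields $y^T(Ax) + z^T(Bx) = 0$. Since $y \ge 0$ and $Ax \le b$ one has $y^T(Ax) \le b^T y$; since $z \ge 0$ and $Bx < c$ one has $z^T(Bx) \le c^T z$, strictly whenever $z \ne 0$. Adding these gives $0 \le b^T y + c^T z$, and this is strict when $z \ne 0$, which contradicts both disjuncts of the second alternative.

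For the substantive direction, I would reduce to a Farkas-style duality via a slack-variable trick. The first alternative is feasible iff there exist $x \in \bbbq^n$ and a scalar $t > 0$ with $Ax \le b$ and $Bx + t \mathbf{1} \le c$; equivalently, the linear program that maximises $t$ subject to these nonstrict constraints attains a value strictly greater than $0$. Under the infeasibility hypothesis the LP value is thus $\le 0$, so strong LP duality supplies nonnegative multipliers $y \in \qplus^m$ for $Ax \le b$ and $z \in \qplus^p$ for $Bx + t \mathbf{1} \le c$ satisfying $A^T y + B^T z = 0$ (from the $x$-columns being free), $\mathbf{1}^T z = 1$ (from the $t$-column, forcing $z \ne 0$), and $b^T y + c^T z \le 0$. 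This is precisely the second alternative, since either $b^T y + c^T z < 0$ or $b^T y + c^T z = 0$ holds together with $z \ne 0$.

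The main obstacle is the mix of strict and nonstrict inequalities, which is exactly what places the theorem beyond Farkas' Lemma in its plain form. A fully elementary alternative avoiding LP duality is induction on $n$ via Fourier--Motzkin elimination, tracking strictness carefully through elimination steps (a strict inequality combined with any other yields a strict one), with the base case $n = 0$ reducing to sign checks on $b$ and $c$ that are easily dualised coordinatewise. The rationality assumption poses no difficulty: the elimination stays within $\bbbq$, and in the duality route one can clear denominators at the end to obtain rational multipliers.
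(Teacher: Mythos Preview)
The paper does not prove this theorem. It is stated in the preliminaries as a classical result, with a pointer to Schrijver's textbook, and is then used as a black box (in the proof of \propref{prop:use:dom} and in deriving the variant \thmref{thm:motzkin:2}). There is therefore no proof in the paper to compare your proposal against.

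For what it is worth, your sketch is sound. The incompatibility direction is exactly the standard argument. In the LP-duality direction there is one small omission: you tacitly assume the LP is feasible, which requires $\{x : Ax \le b\}$ to be nonempty. If $Ax \le b$ is itself infeasible, your LP has no feasible point at all (so one cannot speak of its value being $\le 0$); but then Farkas' Lemma applied to $Ax \le b$ alone yields $y \ge 0$ with $A^T y = 0$ and $b^T y < 0$, and taking $z = 0$ gives the second alternative through the disjunct $b^T y + c^T z < 0$. With that case added, the argument goes through.
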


\section {Contributions}
\label{s:contrib}
We let $D$ denote the finite domain over which the language $(\Gamma,\Delta)$ is defined.
To describe our results we need to introduce some definitions.
\begin{definition}[\domp a b]
\label{def:domp:fpol}
Let $a,b \in D$.
A binary fractional polymorphism $\omega$ of $(\Gamma,\Delta)$ is called \emph{\domp a b} if
\ba
\smash[b]{\ssum{g \in \opers 2} \omega(g) \delta_{a,g(a,b)} \ge \frac{1}{2} > \ssum{g \in \opers 2} \omega(g) \delta_{b,g(a,b)}.}\footnotemark
\ea%
\footnotetext{Here $\delta$ denotes the Kronecker delta function, \ie $\delta_{i,j}=1$ if $i=j$, otherwise $\delta_{i,j}=0$.}
\end{definition}

The following is a generalisation of the concept of weak tournament pairs that was introduced in~\cite{rustem:minsol}.
\begin{definition}[generalised weak tournament pair]
\label{def:wt}
Let $A \subseteq B \subseteq \binom D 2$.
A language $(\Gamma,\Delta)$ is said to admit a \emph{generalised weak tournament pair} on $(A,B)$ if there is a pair of binary functions $f_1,f_2 \in \pol(\Gamma)$ \st the following holds.
\bi
\item
For every $\{a,b\} \in \binom D 2$;
\be
\item \label{wt:a}
if $\{a,b\} \not\in B$ then $f_1|_{\{a,b\}}$ and $f_2|_{\{a,b\}}$ are projections, and
\item \label{wt:b}
if $\{a,b\} \in B \setminus A$ then $f_1|_{\{a,b\}}$ and $f_2|_{\{a,b\}}$ are different idempotent, conservative and commutative operations.
\ee
\item
For any $U \subseteq D$ \st $U \in \close{\Gamma}$ either no $\{x,y\} \in A$ satisfies $\{x,y\} \subseteq U$,
or there is $\{a,b\} \in A$ \st $U \setminus \{b\} \in \close{\Gamma}$ and $(\Gamma,\Delta)$ admits
an \domp a b binary fractional polymorphism.
\ei
\end{definition}
The following definition is inspired by notation used in~\cite{rustem:minhom}.
\begin{definition}
\label{rustemarrow}
For $a,b \in D$ we define $\upa a b = \{ f \in \binopers : f(a,b)=f(b,a)=a \}$ and $\smash{\doa a b} = \{ f \in \binopers : f(a,b)=f(b,a)=b \}$.
For $x_1,\dots,x_m,y_1,\dots,y_m \in D$ and $\ddsymbol_1,\dots,\ddsymbol_m \in \{\upasymbol,\doasymbol\}$ we define
$\dd{x}{y}{1} \dd{x}{y}{2} \cdots \dd{x}{y}{m} = \dd{x}{y}{1} \cap \dd{x}{y}{2} \cap \dots \cap \dd{x}{y}{m}$, \eg $\upa a b \doa c d = \upa a b \cap \doa c d$.
\end{definition}
We can now give names to some classes of languages that will be important.
\begin{definition}
\label{def:names}
We say that a language $(\Gamma,\Delta)$ over $D$ is of type
\bi
\item \typegwtp\ (generalised weak tournament pair) if there is $A,B \subseteq \binom D 2$ \st $(\Gamma,\Delta)$ admits a generalised weak tournament pair on $(A,B)$ and, $\pol(\Gamma)$ contains an idempotent ternary function $m$ that is arithmetical on $\binom D 2 \setminus B$
and satisfies $m(x,y,z) \in \{x,y,z\}$ for every $x,y,z \in D$ \st $|\{x,y,z\}|=3$,
\item \typebsm\ (bisubmodular, see \eg~\cite{softcsp}) if $D = \{a,b,c\}$, $2 \nu(b) \le \nu(a)+\nu(c)$ for every $\nu \in \Delta$, and there are binary idempotent commutative operations $\sqcap,\sqcup \in \pol(\Gamma)$ \st $\sqcap \in \doa a b \doa c b$, $\sqcup \in \upa a b \upa c b$ and $a \sqcup c =  a \sqcap c = b$,
\item \typegmc\ (generalised min-closed, see~\cite{maxonesgen}) if there is $f \in \pol(\Gamma)$ \st for every $\nu \in \Delta$ the following is true.
For all $a,b \in D$ \st $a \ne b$ it holds that if $\nu( f(a,b) ) \ge \max( \nu(a), \nu(b) )$, then $\nu( f(b,a) ) < \min( \nu(a), \nu(b) )$, and for all $a \in D$ it holds that $\nu( f(a,a) ) \le \nu(a)$.
\ei
\end{definition}
Solving instances of \minhom\ expressed in languages of type \typegwtp, \typebsm\ and \typegmc\ can be done in polynomial time.
This is demonstrated by the following results. 
We note that the first result describes a new tractable class while the following two are known cases.%
\footnote{
\cite[Theorem~5.10]{maxonesgen} is stated for a slightly more restricted problem than ours.
It is however not hard to see that the results transfer to our setting.
}
A proof of \thmref{thm:wt:p} is given in \sectref{s:proof:1}.
\begin{theorem}
\label{thm:wt:p}
If there is $S \subseteq 2^D$ \st \csp$(\Gamma^c \cup S)$ is in \class P and $(\Gamma \cup S,\Delta)$ is of type \typegwtp, then \minhom$(\Gamma \cup S, \Delta)$ (and therefore also \minhom$(\Gamma, \Delta)$) is in \class{PO}.
\end{theorem}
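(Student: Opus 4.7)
The plan is to give a polynomial-time algorithm that iteratively shrinks each variable's feasible set while preserving some optimal solution, and then finishes on a residual instance whose polymorphism structure is strong enough to solve directly. Given an instance $I=(V,C,w)$, the first step is to compute the set $D_v\subseteq D$ of values feasible at each variable $v$ by running, for each candidate pair $(v,d)$, the polynomial-time CSP$(\Gamma^c\cup S)$ algorithm with the additional constant constraint $\phi(v)=d$. Since $D_v$ is the projection of $\sol(I)$ onto coordinate $v$, it lies in $\close{\Gamma\cup S}$.

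The main phase is a reduction loop: while some $D_v$ contains a pair from $A$, the generalised weak tournament pair axiom applied with $U=D_v$ supplies a pair $\{a,b\}\in A$ with $\{a,b\}\subseteq D_v$, a pp-definition over $\Gamma\cup S$ of $D_v\setminus\{b\}$, and an \domp{a}{b} fractional polymorphism $\omega$. The crucial claim is that some optimal solution $\phi^\star$ satisfies $\phi^\star(v)\ne b$; granting this, one pp-defines $D_v\setminus\{b\}$ on $v$, appends the resulting constraints, recomputes feasibilities and iterates. To prove the claim, pick any optimal $\phi^\star$ with $\phi^\star(v)=b$ (otherwise we are done) and any solution $\phi'$ with $\phi'(v)=a$ (which exists since $a\in D_v$). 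For every $g\in\supp(\omega)$ the component-wise image $g(\phi',\phi^\star)$ is again a solution, the fractional-polymorphism inequality gives $\sum_g\omega(g)\,m(g(\phi',\phi^\star))\le\tfrac12(m(\phi')+m(\phi^\star))$, and the domination inequality places strictly more $\omega$-mass on $g(a,b)=a$ than on $g(a,b)=b$. Cleanly combined through an LP relaxation dualised by Motzkin's Transposition Theorem, this yields an optimal solution whose value at $v$ differs from $b$.

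After the loop terminates, every pair contained in any $D_v$ is either in $B\setminus A$ or in $\binom{D}{2}\setminus B$. On the former the tournament pair $(f_1,f_2)$ supplies two distinct idempotent commutative conservative operations, while on the latter $m$ acts as a majority by arithmeticity; moreover $m$ is conservative on three-element subsets. This polymorphism regime is strong enough that the residual optimisation problem can be solved in polynomial time, e.g.\ by a further round of feasibility queries combined with the same averaging step now restricted to $B\setminus A$, giving polynomial total running time. The main obstacle is the averaging lemma of the reduction phase, which must turn an inequality on Kronecker deltas into an honest statement about optimal measures—precisely where the LP/Motzkin machinery introduced in the preliminaries is needed.
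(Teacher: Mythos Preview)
Your high-level outline matches the paper's proof: compute the feasible sets $D_v$, iteratively remove an element $b$ from some $D_v$ using an \domp{a}{b} fractional polymorphism supplied by the GWTP axiom, and then solve the residual instance. However, two of your three main steps contain genuine gaps.

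\textbf{The averaging step.} Your choice of an \emph{arbitrary} $\phi'$ with $\phi'(v)=a$ does not work: the fractional-polymorphism inequality then only bounds the average measure by $\tfrac12(m(\phi')+m(\phi^\star))$, which can be far above $m(\phi^\star)$ when $\phi'$ is bad. The paper instead takes $\phi_a$ to be \emph{optimal among solutions with value $a$ at $v$} and $\phi_b$ optimal among those with value $b$. The point is that every $g$ with $g(a,b)=a$ then satisfies $m(g(\phi_a,\phi_b))\ge m(\phi_a)$, and since the \domp{a}{b} condition puts at least half the $\omega$-mass on such $g$, one can split $\omega=\rho+\sigma$ with $\sum\rho=\sum\sigma=\tfrac12$, $\supp(\rho)\subseteq\{g:g(a,b)=a\}$, and some $f\in\supp(\sigma)$ with $f(a,b)\ne b$. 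Subtracting $\tfrac12 m(\phi_a)$ from both sides of the averaged inequality gives $\sum_g 2\sigma(g)\,m(g(\phi_a,\phi_b))\le m(\phi_b)$, so some solution with value $\ne b$ at $v$ has measure at most $m(\phi_b)$. This is an elementary counting argument; Motzkin's Transposition Theorem plays no role here (the paper uses it only in the proof of Proposition~\ref{prop:use:dom}), and your appeal to it is a placeholder where the actual mechanism is missing.

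\textbf{The residual instance.} Two problems. First, an operation arithmetical on $\{a,b\}$ satisfies $m(a,b,b)=m(a,b,a)=m(b,b,a)=a$, which is a Mal'tsev (minority) condition, not majority; so your claim that ``$m$ acts as a majority'' is incorrect. Second, ``a further round of feasibility queries combined with the same averaging step now restricted to $B\setminus A$'' cannot work, because by construction there is no dominating fractional polymorphism available on pairs in $B\setminus A$. The paper instead massages $f_1,f_2,m$ into operations $f_1',f_2',g'$ that are conservative on every reduced domain $D_v$ (using compositions such as $f_1'=f_1[f_1,\overline{f_1}]$ and $g'(x,y,z)=g(f_1'(x,f_1'(y,z)),f_1'(y,f_1'(x,z)),f_1'(z,f_1'(x,y)))$), and then invokes Takhanov's multi-sorted \minhom\ tractability result~\cite[Theorem~23]{rustem:minsol}. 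That black box is what actually finishes the proof.
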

\begin{theorem}[{\cite[Corollary~6.1]{thapper:zivny:lp}}]
\label{thm:bsm:sl:p}
If $(\Gamma,\Delta)$ is of type \typebsm, then \minhom$(\Gamma, \Delta)$ is in \class{PO}.
\end{theorem}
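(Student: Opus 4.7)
The plan is to verify the hypotheses of \cite[Corollary~6.1]{thapper:zivny:lp}, which (extending the LP-tractability framework of Thapper and \v{Z}ivn\'y to the setting at hand) guarantees that the basic LP relaxation solves \minhom$(\Gamma,\Delta)$ in polynomial time whenever $(\Gamma,\Delta)$ admits symmetric fractional polymorphisms of every arity $n \ge 2$. The bulk of the work is therefore producing such fractional polymorphisms out of the operations $\sqcap$ and $\sqcup$ supplied by the \typebsm\ hypothesis.

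First I would exhibit a binary symmetric fractional polymorphism. Define $\omega \in \fpol(\Gamma,\Delta)$ by $\omega(\sqcap)=\omega(\sqcup)=1/2$ and $\omega(g)=0$ for all other binary $g$. Since $\sqcap,\sqcup\in\pol(\Gamma)$ the support condition is satisfied and $\sum_g \omega(g)=1$. To check the fractional-polymorphism inequality, I would do a case analysis on $\{x,y\} \subseteq D=\{a,b,c\}$. The case $x=y$ is immediate from idempotence. For $\{x,y\}\in\{\{a,b\},\{b,c\}\}$ the defining conditions $\sqcap \in \doa a b \doa c b$ and $\sqcup \in \upa a b \upa c b$ force $\{\sqcap(x,y),\sqcup(x,y)\}=\{x,y\}$, so the inequality holds with equality. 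The only nontrivial case is $\{x,y\}=\{a,c\}$, where $\sqcap(a,c)=\sqcup(a,c)=b$ and the inequality becomes $\nu(b)\le\tfrac12(\nu(a)+\nu(c))$, which is precisely the bisubmodularity hypothesis $2\nu(b)\le\nu(a)+\nu(c)$ built into \typebsm.

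Next I would extend this to symmetric fractional polymorphisms of every arity $n\ge 2$. Since $\sqcap$ and $\sqcup$ are both idempotent and commutative, standard iterated compositions (balancing applications of $\sqcap$ and $\sqcup$ over the $n$ input arguments) yield a symmetric fractional polymorphism at each arity, with the required inequality for $\nu\in\Delta$ following from the binary case by averaging. This is exactly the construction used for bisubmodular languages on three-element domains in \cite{thapper:zivny:lp}, where it is shown to produce an infinite family of symmetric fractional polymorphisms.

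The main obstacle is this last step: verifying that the binary inequality propagates cleanly to all arities. The binary case analysis itself is straightforward, and once symmetric fractional polymorphisms of every arity are in hand, the general LP-based polynomial-time algorithm of Thapper and \v{Z}ivn\'y applies directly, delivering the claimed membership of \minhom$(\Gamma,\Delta)$ in \class{PO}.
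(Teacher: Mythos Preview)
The paper does not give its own proof of this theorem; it is stated with a direct citation to \cite[Corollary~6.1]{thapper:zivny:lp}, and the surrounding discussion merely remarks that languages of type \typebsm\ are solved by the LP relaxation of Thapper and \v{Z}ivn\'y. Your sketch of how the result follows from that framework is essentially correct: the binary case analysis showing that $\{\sqcap \mapsto \tfrac12, \sqcup \mapsto \tfrac12\}$ is a symmetric fractional polymorphism is accurate, and the extension to all arities (which you rightly flag as the nontrivial part) is precisely the bisubmodular construction carried out in the cited reference rather than something the present paper supplies.
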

\begin{theorem}[{\cite[Theorem~5.10]{maxonesgen}}]
\label{thm:gmc:p}
If $(\Gamma,\Delta)$ is of type \typegmc, then \minhom$(\Gamma, \Delta)$ is in \class{PO}.
\end{theorem}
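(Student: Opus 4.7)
The plan is to invoke \cite[Theorem~5.10]{maxonesgen}, which establishes the same tractability in a slightly more restricted setting, and to argue that the transfer to our \minhom\ formulation is routine; the footnote attached to the theorem statement already signals this. In the literature around Max-Ones-style generalisations such results are customarily stated either for a single valuation $\nu : D \to \qplus$ (a Min-Sol-style input) or for integer-weighted variants, whereas our \minhom\ formulation is more liberal in that each variable $v$ carries a rational weight $w(v,\nu)$ for every $\nu \in \Delta$. My task is therefore a reduction from the general setting to the one treated in the reference.

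The reduction I would use is the obvious one. Given an instance $(V,C,w)$ of \minhom$(\Gamma,\Delta)$, for each variable $v \in V$ and each $\nu \in \Delta$ introduce an auxiliary copy $v_\nu$, link it to $v$ by the binary equality relation on $D$, and attach the entire weight $w(v,\nu)$ to $v_\nu$ under the single valuation $\nu$. The binary equality is preserved by every operation on $D$, so $f$ remains a polymorphism of the enlarged relational structure; each variable in the new instance is now associated with exactly one valuation, matching the restricted format; and the measure of every assignment is preserved by construction. The per-valuation generalised min-closed condition required by the reference is precisely the hypothesis encoded in type \typegmc\ via \defref{def:names}, which asks for the condition to hold for every $\nu \in \Delta$ separately. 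Tractability therefore transfers.

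The main obstacle is purely bookkeeping: one must check that the reduction is genuinely polynomial (the blow-up is $O(|V|\cdot|\Delta|)$), that equality can be added without altering the tractable fragment (it is invariant under every operation, so $\pol(\Gamma)$ is unchanged), and that any residual side conditions of the referenced theorem, such as integrality of weights, can be discharged by clearing denominators since all weights lie in $\qplus$. No new algebraic content beyond what \defref{def:names} already guarantees is required.
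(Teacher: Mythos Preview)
Your approach matches the paper's: the theorem is stated as a citation of \cite[Theorem~5.10]{maxonesgen}, and the paper offers no proof beyond the footnote you already identified, namely that the transfer from the slightly more restricted setting of \cite{maxonesgen} to the present \minhom\ formulation is routine. Your fleshing-out of that footnote via a variable-copying reduction is reasonable and in the intended spirit.

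One remark worth making: immediately after stating \thmrefs{thm:wt:p}--\ref{thm:gmc:p} the paper also sketches an alternative route for \typegmc, observing that any such language admits a min-set-function \cite[Theorem~5.18]{kuivinen}, hence a symmetric fractional polymorphism of every arity, hence tractability via the LP result of \cite{thapper:zivny:lp}. This bypasses the need to match the precise input format of \cite{maxonesgen} altogether. Your reduction, as written, produces an instance in which different auxiliary variables are weighted by different valuations $\nu \in \Delta$, which is still not literally the single-valuation \minsol\ format if that is what \cite{maxonesgen} assumes; the LP route avoids this bookkeeping concern entirely. Either way, no new algebraic content is needed, which is the point both you and the paper are making.
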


Instances expressed using languages of type \typebsm\ can, as proved in~\cite{thapper:zivny:lp}, be solved through a certain linear programming formulation.
We note that this also holds for languages of type \typegmc\@.
It is known that any language of type \typegmc\ must admit a min-set-function~\cite[Theorem~5.18]{kuivinen}.
From this it follows that also a symmetric fractional polymorphism of every arity must be admitted, and the claim follows from~\cite{thapper:zivny:lp}.

The tractability of languages of type \typegwtp\ on the other hand can not directly be explained by the results in~\cite{thapper:zivny:lp}.
It can \eg be checked that the language $(\{ \smallcrosspic b a b a \}, \{ a \mapsto 0, b \mapsto 1 \})$ is of type \typegwtp.
This language does not admit any symmetric fractional polymorphism and is therefore not covered by the results in~\cite{thapper:zivny:lp}.

Often (as \eg demonstrated by \thmref{thm:wt:p}) the fact that a language admits an \domp a b binary fractional polymorphism can be useful for tractability arguments.
Also the converse fact, that a language does not admit such a fractional polymorphism, can have useful consequences.
An example of this is the following proposition, which will be used in the proofs of our main results.
\begin{proposition}
\label{prop:use:dom}
Let $a,b \in D$, $a \ne b$.
If $(\Gamma,\Delta)$ does not admit a binary fractional polymorphism that is \domp a b, then $\eclose{\Gamma,\Delta}$ contains a unary function $\nu$ that satisfies $\infty > \nu(a)>\nu(b)$.
\end{proposition}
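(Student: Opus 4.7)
The plan is to apply LP duality (a consequence of Motzkin's Transposition Theorem) to a linear program whose feasibility corresponds to the existence of an $(a,b)$-dominating binary fractional polymorphism, and to translate the dual certificate obtained from infeasibility into an explicit \minhom$(\Gamma,\Delta)$-instance whose expressed unary cost function at a distinguished variable is the desired $\nu$.

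Let $P = \pol(\Gamma) \cap \opers 2$. Consider the linear program in variables $\omega_g \ge 0$ ($g \in P$) with constraints $\sum_g \omega_g = 1$, the fractional polymorphism inequalities (one for each $\rho \in \Delta$ and each $(x,y) \in D^2$ with $\rho(x)+\rho(y)<\infty$), and $\sum_g \omega_g \delta_{a,g(a,b)} \ge \tfrac{1}{2}$; minimise $\sum_g \omega_g \delta_{b,g(a,b)}$. The assignment $\omega_{\pr_1}=\omega_{\pr_2}=\tfrac{1}{2}$ is feasible, so the program has a finite optimum, and the assumption that no $(a,b)$-dominating fractional polymorphism exists is equivalent to this optimum being $\ge \tfrac{1}{2}$. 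Strong LP duality then supplies $\mu \in \bbbq$, $\beta \ge 0$, and $\alpha_{\rho,x,y} \ge 0$ such that, writing $F(g) := \sum \alpha_{\rho,x,y}\rho(g(x,y))$ and $C := \sum \alpha_{\rho,x,y}(\rho(x)+\rho(y))/2$, one has
\[
F(g) \ge \mu + \beta\,\delta_{a,g(a,b)} - \delta_{b,g(a,b)} \text{ for all } g \in P \ \ (\text{i}), \quad \mu - C + \beta/2 \ge \tfrac{1}{2} \ \ (\text{ii}).
\]

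To interpret the certificate I form an \emph{indicator instance} $I$ with variable set $\{v_{x,y} : (x,y) \in D^2\}$, with one constraint $R(v_{t^1_1,t^2_1},\dots,v_{t^1_k,t^2_k})$ for every $R \in \Gamma$ of arity $k$ and every pair $t^1,t^2 \in R$, and with weight $\alpha_{\rho,x,y}$ placed on $\rho$ at the variable $v_{x,y}$. Its solutions are exactly $\phi_g(v_{x,y}) = g(x,y)$ for $g \in P$, and $m(\phi_g)=F(g)$. Setting $\nu(c) = \min\{m(\phi) : \phi \in \sol(I),\ \phi(v_{a,b}) = c\}$ exhibits $\nu \in \eclose{\Gamma,\Delta}$. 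Since $\pr_1,\pr_2 \in P$ we have $\nu(a)\le F(\pr_1)=\sum\alpha_{\rho,x,y}\rho(x)<\infty$ and $\nu(b)\le F(\pr_2)=\sum\alpha_{\rho,x,y}\rho(y)$, while (i) applied to any $g$ with $g(a,b)=a$ gives $\nu(a)\ge \mu+\beta$.

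The strict inequality follows by a short algebraic manipulation of (i) and (ii). Applying (i) at $g=\pr_1$ and combining with (ii) gives $F(\pr_1) \ge \mu + \beta \ge \tfrac{1}{2} + C + \beta/2$, and since $2C = F(\pr_1)+F(\pr_2)$ this rearranges to $F(\pr_1)-F(\pr_2) \ge 1+\beta$. Hence
\[
\nu(a)-\nu(b) \ge (\mu+\beta) - F(\pr_2) \ge \tfrac{1}{2} + \tfrac{F(\pr_1)-F(\pr_2)}{2} + \tfrac{\beta}{2} \ge 1+\beta > 0,
\]
so $\infty > \nu(a) > \nu(b)$ as required. The main obstacle is arranging the linear program so that the dual witness simultaneously lower-bounds $\nu(a)$ and upper-bounds $\nu(b)$ with a quantifiable gap; it is precisely the asymmetric encoding of the two thresholds in $(a,b)$-dominance (one as a hard constraint, the other as the objective) that makes the comparison via $\pr_1$ and $\pr_2$ yield strict separation.
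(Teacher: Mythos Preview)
Your approach is essentially the paper's: set up a linear system whose solvability is equivalent to the existence of an $(a,b)$-dominating fractional polymorphism, extract a dual certificate, and turn that certificate into weights on a binary indicator instance so that the expressed unary function separates $a$ and $b$. Your use of an optimisation LP with strong duality is a clean alternative to the paper's direct appeal to Motzkin on a feasibility system, and your explicit computation of the gap $\nu(a)-\nu(b)\ge 1+\beta$ is a nice touch.

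There is, however, a technical gap in the handling of $\infty$-valued cost functions. Your LP has variables $\omega_g$ for all $g\in P=\pol(\Gamma)\cap\opers2$, but the fractional-polymorphism constraint $\sum_g\omega_g\,\rho(g(x,y))\le\tfrac12(\rho(x)+\rho(y))$ may then carry an infinite coefficient when $\rho(g(x,y))=\infty$ even though $\rho(x),\rho(y)<\infty$; such an LP is not well-defined over $\bbbq$. If you repair this by restricting the variables to $\Omega=\{g\in P:\rho(g(x,y))<\infty\text{ for all }\rho\in\Delta,\,x,y\in D_\rho\}$, the dual inequality (i) is only guaranteed for $g\in\Omega$, yet in your instance $\nu(a)=\min_{g\in P,\,g(a,b)=a}F(g)$ ranges over all of $P$. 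A polymorphism $g\in P\setminus\Omega$ with $g(a,b)=a$ could have $F(g)$ finite (because $0\cdot\infty=0$ and the offending $\alpha_{\rho,x,y}$ might vanish) and smaller than $\mu+\beta$, breaking your lower bound on $\nu(a)$. The paper closes this hole by adding an $\epsilon$-weighted copy of every $\nu\in\Delta$ at every variable $v_{x,y}$ with $x,y\in D_\nu$, which forces $m(\phi_g)=\infty$ whenever $g\notin\Omega$; you should do the same (and verify that a sufficiently small $\epsilon$ does not reverse the strict inequality). A second, smaller point: adding one constraint for every $R\in\Gamma$ and every pair of tuples may produce an infinite instance; it suffices, as the paper does, to pick one witnessing $R_g$ for each of the finitely many $g\in\opers2\setminus\pol(\Gamma)$.
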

The proof is given in \sectref{s:proof:2}.

\subsection {Conservative Languages}
We call $(\Gamma,\Delta)$ \emph{conservative} if $2^D \subseteq \Gamma$, \ie if the crisp language contains all unary relations.
The complexity of \minhom$(\Gamma,\Delta)$ for conservative languages $(\Gamma,\Delta)$ was classified in~\cite{rustem:minsol} under the restriction that $\Delta$ contains only finite-valued functions,
and that for each pair $a,b \in D$ there exists some $\nu\in\Delta$ \st either $\nu(a) < \nu(b)$ or $\nu(a) > \nu(b)$.
It was posted in~\cite{rustem:minhom,rustem:minsol} as an open problem to classify the complexity of the problem also without restrictions on $\Delta$.
The following theorem does just that.

\begin{theorem}
\label{thm:conservative}
Let $(\Gamma,\Delta)$ be a conservative language on a finite domain.
If \csp$(\Gamma)$ is in \class{P} and $(\Gamma,\Delta)$ is of type \typegwtp, then \minhom$(\Gamma,\Delta)$ is in \class{PO}, otherwise \minhom$(\Gamma,\Delta)$ is \class{NP}-hard.
\end{theorem}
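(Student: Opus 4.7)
The tractability half is immediate from \thmref{thm:wt:p} applied with $S=2^D$: conservativity of $\Gamma$ gives $\csp(\Gamma^c\cup 2^D)=\csp(\Gamma)$, so whenever $\csp(\Gamma)$ is in \class P and $(\Gamma,\Delta)$ is of type \typegwtp, \minhom$(\Gamma,\Delta)$ is in \class{PO}. For the hardness direction, \class{NP}-hardness of $\csp(\Gamma)$ trivially lifts to \minhom$(\Gamma,\Delta)$, so I focus on the case where $\csp(\Gamma)$ is in \class P but $(\Gamma,\Delta)$ is not of type \typegwtp. First I would pass to a min-core $(\Gamma',\Delta')$; this min-core remains conservative since every unary $U\in 2^D$ restricts to $U\cap f(D)\in 2^{f(D)}$ and together these exhaust $2^{f(D)}$. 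Because $\Gamma'$ is still conservative and $\csp(\Gamma')$ is in \class P, the classical structural description of tractable conservative \csp s applies: for every pair $\{a,b\}\in \binom{D'}{2}$, $\pol(\Gamma')$ contains a polymorphism acting on the pair as a majority, as one of the two conservative idempotent commutative binary operations, or as the minority.

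Next I would attempt to assemble the \typegwtp\ witnesses directly. Let $B$ be the set of pairs $\{a,b\}$ on which both conservative idempotent commutative binary operations (``meet'' and ``join'') appear in $\pol(\Gamma')$, so that every pair in $\binom{D'}{2}\setminus B$ admits a majority polymorphism; combine these majorities with a Mal'tsev-type operation from the same structural description to produce the ternary arithmetical $m$ required by \defref{def:names}. For $A\subseteq B$, select those pairs $\{a,b\}$ such that some orientation admits an \domp a b binary fractional polymorphism and $U\setminus\{b\}\in\close{\Gamma'}$ for every $U\in\close{\Gamma'}$ that contains $\{a,b\}$. If this attempted construction goes through, then $(\Gamma',\Delta')$ is of type \typegwtp, contradicting our assumption, so at least one obstruction must be present.

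The remaining work is a case analysis of this obstruction, in each case producing a reduction from a known \class{NP}-hard problem. If some pair admits only a minority polymorphism, the restriction of $\Gamma'$ to the pair is a system of affine relations and any $\nu\in\Delta'$ separating the two points of the pair yields a reduction from a weighted variant of \minones. If a pair lies in $B$ but neither orientation admits a dominating binary fractional polymorphism, \propref{prop:use:dom} delivers unary functions $\nu,\nu'\in\eclose{\Gamma',\Delta'}$ with $\nu(a)>\nu(b)$ \emph{and} $\nu'(b)>\nu'(a)$, and combining these with a forced semilattice direction encodes a weighted $s$-$t$-cut instance. The hardest case, which I expect to be the main obstacle and the content of the appendix case analysis alluded to in the introduction, is the failure of the pp-definability clause of \defref{def:wt}: some $U\in\close{\Gamma'}$ meets $A$ but admits no $\{a,b\}\in A\cap\binom{U}{2}$ with $U\setminus\{b\}\in\close{\Gamma'}$. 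Here one must combine the unary cost functions from \propref{prop:use:dom} with the rigidity of $U$ to weighted-pp-define an \class{NP}-hard gadget; I would carry this out by repeated appeals to \thmref{thm:motzkin} on systems of inequalities indexed by $\Delta'$, in the spirit of the proof of \propref{prop:use:dom}, to extract enough expressible cost functions to simulate a hard weighted problem on the relational structure provided by $U$.
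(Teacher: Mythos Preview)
Your proposal misidentifies where the difficulty lies. Since $(\Gamma,\Delta)$ is conservative, $2^D\subseteq\Gamma\subseteq\close\Gamma$; hence for any $U\in\close\Gamma$ and any $b\in U$ we have $U\setminus\{b\}\in\close\Gamma$ automatically. The second bullet of \defref{def:wt} is therefore satisfied for free once $A$ is taken (as the paper does) to be simply the set of pairs admitting a dominating fractional polymorphism in some direction. Your ``hardest case''---failure of the pp-definability clause---cannot occur in the conservative setting, and the Motzkin-based gadget argument you sketch for it is aimed at a phantom obstacle. (Passing to a min-core is also unnecessary: conservativity forces every unary polymorphism to fix each singleton, so the language is already a min-core.)

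The genuine obstruction to \typegwtp\ here is obtaining a \emph{single} pair $f_1,f_2\in\pol(\Gamma)$ that satisfies condition~\ref{wt:b} of \defref{def:wt} on \emph{every} pair in $B\setminus A$ simultaneously; having suitable operations on each two-element subset individually, which is all your appeal to the conservative \csp\ classification gives, is not enough. The paper (following Takhanov) resolves this via an auxiliary graph $T$ whose vertices are the ordered pairs $(a,b)$ with $\{a,b\}\in B\setminus A$, with an edge between $(a,b)$ and $(c,d)$ whenever $\pol(\Gamma)\cap\doa a b\doa c d=\emptyset$. If $T$ is bipartite, \lemref{lem:bipartite} produces the uniform $f_1,f_2$ (or hardness) and \lemref{lem:arithmetical} the arithmetical $m$, so $(\Gamma,\Delta)$ is of type \typegwtp. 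If $T$ has an odd cycle, \propref{prop:use:dom} together with \lemref{lem:edgerels} shows that each edge contributes a cross relation, and composing these around the cycle yields $\smallcrosspic{b}{a}{b}{a}\in\close\Gamma$ for some $(a,b)$ with $\{a,b\}\in B\setminus A$, whence hardness via the two-element classification. Your outline never engages with this global-compatibility problem and so does not reach the decisive argument.
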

We prove the theorem in \sectref{s:proof:3}.

Kolmogorov and \v{Z}ivn\'{y}~\cite{kolmogorov:zivny:conservativevcsp} completely classified the complexity of conservative {\vcsp}s.
Since every \minhom\ can be stated as a \vcsp, one might think that the classification provided here is implied by the results in~\cite{kolmogorov:zivny:conservativevcsp}.
This is not the case.
A \vcsp-language is called conservative if it contains all unary $\{0,1\}$-valued cost functions.
The conservative \minhom-languages on the other hand correspond to \vcsp-languages that contain every unary $\{0,\infty\}$-valued cost function.
(Note however that far from all \vcsp-languages that contain every unary $\{0,\infty\}$-valued cost function correspond to a \minhom-language.)

\subsection {\minsol\ on the Three-element Domain}
In this section we fully classify the complexity of \minsol\ on the three-element domain.
\begin{theorem}
\label{thm:minsol}
Let $(\Gamma,\nu)$ be a language over a three-element domain $D$ and $\nu : D \to \qplus$ be injective.
If $(\Gamma,\nu)$ is a min-core and there is no $S \subseteq 2^D$ \st $(\Gamma \cup S,\nu)$ is of type \typegwtp, \typebsm\ or \typegmc, then \minsol$(\Gamma,\nu)$ is \class{NP}-hard.
\end{theorem}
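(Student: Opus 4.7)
The plan is to prove hardness by a case analysis that exploits the small size of the domain. Since $\nu$ is injective, I may assume (by relabeling) that $D=\{a,b,c\}$ with $\nu(a)<\nu(b)<\nu(c)$. This fixed ordering drives everything: for each ordered pair $(x,y)$ with $\nu(x)<\nu(y)$ I would apply Proposition~\ref{prop:use:dom}, concluding that either $(\Gamma,\nu)$ admits an \domp x y binary fractional polymorphism, or else $\eclose{\Gamma,\nu}$ contains a unary cost function $\mu$ with $\infty>\mu(x)>\mu(y)$. Applying this to all three relevant pairs $(a,b)$, $(a,c)$, $(b,c)$ partitions the situation into $2^{3}=8$ subcases based on which ``dominations'' are admitted.

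Next, I would compose these expressible unary functions with $\nu$ itself to generate a richer family of unary cost functions, and use the min-core hypothesis to rule out non-surjective unary polymorphisms that could collapse $D$ to a two-element domain (which would reduce the problem to the fully classified two-element case~\cite{softcsp}). In each of the eight branches I would then enumerate the subsets $S\subseteq 2^{D}$ (a finite, small list, since $|2^{D}|=8$) that could possibly be added without destroying the min-core property and track the effect on $\pol(\Gamma\cup S)$ and $\fpol(\Gamma\cup S,\nu)$. For each resulting configuration I must verify one of two things: either explicit witnesses for one of the three tractable types \typegwtp, \typebsm, \typegmc\ exist (contradicting the hypothesis of the theorem), or I can exhibit a weighted pp-definition and an expressible unary function that together encode a known \class{NP}-hard problem such as Vertex Cover or Maximum Independent Set, establishing the desired hardness reduction.

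The main obstacle is the sheer bookkeeping of the case analysis: each branch requires combining the domination information with the available binary, ternary, and majority-like polymorphisms and then either producing a structural witness fitting the somewhat elaborate \defref{def:wt} and \defref{def:names}, or constructing a reduction. The bridge between the two sides is provided by Motzkin's Transposition Theorem (\thmref{thm:motzkin}), used in the same spirit as in the proof of \propref{prop:use:dom}: whenever a fractional polymorphism of a specific prescribed form fails to exist, Motzkin yields a certificate that is exactly an expressible cost function realising the reduction gadget. I would organise the case analysis by first eliminating branches where all three dominations hold (which tend to yield \typegwtp\ via the results of Takhanov~\cite{rustem:minsol}), then the ``middle'' branches (which typically yield \typebsm\ or \typegmc\ once the right idempotent operations are extracted), and finally the branches with the fewest dominations, where hardness reductions must be constructed by hand.
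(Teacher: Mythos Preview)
Your outline has the right overall shape (fix $\nu(a)<\nu(b)<\nu(c)$, argue by cases, use Motzkin-style duality), but the specific decomposition you propose and the tools you invoke are not strong enough to close the argument.

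First, splitting into eight cases according to which of the three $(x,y)$-dominations hold is not the decomposition that drives the proof. The paper's case analysis is organised instead by which two-element subsets $\{a,b\},\{a,c\},\{b,c\}$ lie in $\wclose{\Gamma,\nu}$ (seven cases, the eighth being conservative). This matters because the definition of \typegwtp\ requires, for every $\{a,b\}\in A$, that $D\setminus\{b\}\in\close{\Gamma}$; knowing that a domination exists tells you nothing about which unary relations are definable. Your claim that ``all three dominations hold tends to yield \typegwtp\ via Takhanov'' is therefore unsupported: Takhanov's machinery (Lemmas~\ref{lem:get:hardness}--\ref{lem:notbipartite} here) presupposes that all two-element subsets are available, and establishing that from dominations alone is exactly the hard part.

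Second, and more seriously, \propref{prop:use:dom} is the wrong Motzkin application for producing the \emph{polymorphisms} needed to witness \typegwtp, \typebsm\ or \typegmc. That proposition yields an expressible \emph{cost function} when a domination fails, which is useful for hardness gadgets but not for exhibiting operations in $\pol(\Gamma)$. The paper's engine is a different Motzkin-based statement, \lemref{lem:fpol}: if the cross relation on $\{a,b\}$ is \emph{not} weighted pp-definable, then some $\omega\in\fpol(\Gamma,\nu)$ has an $f\in\supp(\omega)$ with $\{f(a,b),f(b,a)\}\neq\{a,b\}$ and controlled cost. It is this lemma, applied repeatedly and combined with explicit compositions of the resulting operations, that manufactures the commutative conservative pairs and the $\sqcap,\sqcup$ needed for \typebsm. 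Your proposal has no mechanism for this step.

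Finally, you omit the reduction to $\Gamma^c\subseteq\Gamma$ via \lemref{lem:hasconst} (min-core and not \typegmc\ $\Rightarrow$ all constants are weighted pp-definable). Without constants you cannot freely restrict to two-element subdomains to invoke the Boolean classification, so the ``known \class{NP}-hard problem'' reductions you allude to do not get off the ground.
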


The following two lemmas provide key assistance in the proof of \thmref{thm:minsol}.
The first of the two is a variation of Lemma~3.5 in~\cite{thapper:zivny:fvcsp}.
The lemmas are proved in \sectsref{s:lem:fpol} and~\ref{s:hasconst}.
\begin{lemma}
\label{lem:fpol}
If $\smallcrosspic a b a b \not\in \wclose{\Gamma,\Delta}$, then for every $\sigma \in \eclose{\Gamma,\Delta}$ there is $\omega \in \fpol(\Gamma,\Delta)$ with $f \in \supp(\omega)$ \st $\{f(a,b),f(b,a)\} \ne \{a,b\}$ and $\sigma(f(a,b))+\sigma(f(b,a)) \le \sigma(a)+\sigma(b)$.
\end{lemma}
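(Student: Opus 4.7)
The plan is to prove the contrapositive via LP duality. Fix $\sigma \in \eclose{\Gamma, \Delta}$ and assume that no $\omega \in \fpol(\Gamma, \Delta)$ contains in its support a binary operation $g$ with both $\{g(a,b), g(b,a)\} \ne \{a,b\}$ and $\sigma(g(a,b)) + \sigma(g(b,a)) \le \sigma(a) + \sigma(b)$. Partition the binary polymorphisms of $\Gamma$ into $\mathcal{G}_1 := \{g : \{g(a,b), g(b,a)\} = \{a,b\}\}$, the forbidden set $\mathcal{G}_2$ above, and the remainder $\mathcal{G}_3$. The goal is to build an instance of \minhom$(\Gamma, \Delta)$ whose optimal projections onto two designated variables are exactly $\{(a,b),(b,a)\}$, contradicting the hypothesis on weighted pp-definability.

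Consider the linear system in the variables $\omega(g) \ge 0$ for $g \in \pol(\Gamma) \cap \binopers$: the normalisation $\sum_g \omega(g) = 1$, the fractional polymorphism inequalities $\sum_g \omega(g) \nu(g(\vec x)) \le \tfrac{1}{2}(\nu(x_1) + \nu(x_2))$ for every $\nu \in \Delta$ and $\vec x \in D^2$, and the strict constraint $\sum_{g \in \mathcal{G}_2} \omega(g) > 0$. By hypothesis this system is infeasible, so \thmref{thm:motzkin} (applied after splitting the equality into two inequalities) supplies nonnegative rationals $y_{\nu,\vec x}$, $\mu_g$, $\beta$ and a real $\alpha$ such that $\Phi(g) := \sum_{\nu,\vec x} y_{\nu,\vec x}\, \nu(g(x_1, x_2)) = \mu_g - \alpha + \beta \cdot [g \in \mathcal{G}_2]$ for every binary polymorphism $g$, together with either $K + \alpha < 0$, or $K + \alpha = 0$ and $\beta > 0$, where $K := \sum_{\nu,\vec x} y_{\nu,\vec x} \tfrac{1}{2}(\nu(x_1) + \nu(x_2))$. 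Evaluating at the projections $\pr_1, \pr_2 \in \mathcal{G}_1$ gives $\Phi(\pr_1) + \Phi(\pr_2) = 2K$, while $\Phi(\pr_i) = \mu_{\pr_i} - \alpha$ with each $\mu_{\pr_i} \ge 0$; the strict case $K + \alpha < 0$ would force $\mu_{\pr_1} + \mu_{\pr_2} < 0$, contradiction. Hence $\alpha = -K$, $\beta > 0$, $\mu_{\pr_i} = 0$, and therefore $\Phi(g) \ge K$ for every binary polymorphism $g$, with the strict gap $\Phi(g) \ge K + \beta$ on $\mathcal{G}_2$.

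The witnessing instance $J$ of \minhom$(\Gamma, \Delta)$ has variables $w_{\vec x}$ indexed by $\vec x \in D^2$; for every $R \in \Gamma$ of arity $n$ and every ordered pair $(t^1, t^2) \in R \times R$ we add the constraint $(w_{(t^1_1, t^2_1)}, \ldots, w_{(t^1_n, t^2_n)}) \in R$. These constraints force the assignment $\vec x \mapsto w_{\vec x}$ to coincide with a binary polymorphism $g$ of $\Gamma$, so feasible solutions are in bijection with $\pol(\Gamma) \cap \binopers$. We add weight $y_{\nu, \vec x}$ on each $(w_{\vec x}, \nu)$ and, via the standard expressibility gadget for $\sigma$, attach cost $M\sigma$ to both $w_{(a,b)}$ and $w_{(b,a)}$, with $M > 0$ chosen so that $M(\sigma(a) + \sigma(b)) < \beta$ (generic case $\sigma(a), \sigma(b) < \infty$). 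The cost of the solution associated with $g$ is $\Phi(g) + M(\sigma(g(a,b)) + \sigma(g(b,a)))$; a case split over $\mathcal{G}_1, \mathcal{G}_2, \mathcal{G}_3$, combining the $\Phi$-gap on $\mathcal{G}_2$ with the strict $\sigma$-gap on $\mathcal{G}_3$, shows that the minimum $K + M(\sigma(a)+\sigma(b))$ is attained exactly on $\mathcal{G}_1$, so the set of optimal $(w_{(a,b)}, w_{(b,a)})$-projections is exactly $\{(a,b),(b,a)\}$.

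The main obstacle is the Motzkin bookkeeping, particularly the use of the projection identity $\Phi(\pr_1) + \Phi(\pr_2) = 2K$ to exclude the strict alternative and guarantee $\beta > 0$. The degenerate case in which $\sigma(a)$ or $\sigma(b)$ equals $\infty$ is handled separately: then $\mathcal{G}_3$ is empty, and the same instance without the $M\sigma$-gadget already yields $\{(a,b),(b,a)\}$ as its set of optimal projections. A preliminary fact that underpins the use of $\sigma$ in the instance—standard but worth recording—is that every $\omega \in \fpol(\Gamma, \Delta)$ is also a fractional polymorphism of $(\Gamma, \Delta \cup \{\sigma\})$.
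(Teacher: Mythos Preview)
Your argument is correct and follows essentially the same route as the paper: both set up Motzkin duality between fractional polymorphisms and nonnegative weights on the second-order indicator instance, and then use the resulting weights to weighted pp-define the cross relation; you simply enter Motzkin from the fractional-polymorphism side and fold the $\sigma$-cost directly into the instance (via a small multiplier $M$), whereas the paper enters from the weight side and applies $\sigma$ as a separate $\argmin$ step on the optimal-projection relation. One technicality the paper tracks more carefully is $\infty$-values in $\Delta$: before Motzkin is applicable over $\bbbq$ one must restrict to pairs $(x,y)\in D_\nu^2$ and to polymorphisms keeping $\nu$ finite there, and the paper's extra $\epsilon$-term in the objective is precisely what rules the remaining polymorphisms out of the finite-measure solution set.
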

\begin{lemma}
\label{lem:hasconst}
Let $(\Gamma,\nu)$ be a language over a three-element domain $D$ and $\nu : D \to \qplus$ be injective.
If $(\Gamma,\nu)$ is a min-core and not of type \typegmc, then $\Gamma^c \subseteq \wclose{\Gamma,\nu}$.
\end{lemma}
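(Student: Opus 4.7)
My plan is to argue by contrapositive: assume $\Gamma^c \not\subseteq \wclose{\Gamma,\nu}$ and build a binary polymorphism of $\Gamma$ witnessing that $(\Gamma,\nu)$ is of type \typegmc. Relabel $D=\{a,b,c\}$ so that $\nu(a)<\nu(b)<\nu(c)$, which is possible by injectivity of $\nu$. The singleton $\{a\}$ is trivially in $\wclose{\Gamma,\nu}$: the instance on one unconstrained variable $v$ with $w(v,\nu)=1$ has unique optimum $\phi(v)=a$. It therefore suffices to place the binary relation $R=\{(b,c),(c,b)\}$ in $\wclose{\Gamma,\nu}$: given a wpp-definition of $R$ with designated variables $v_1^*,v_2^*$, scaling all its weights by a sufficiently large integer $M$ and adding a fresh unit of $\nu$-weight at $v_1^*$ forces optimal solutions to lie in the $R$-defining optimum set and, among those, to minimise $\nu(\phi(v_1^*))$, yielding the unique optimum $(\phi(v_1^*),\phi(v_2^*))=(b,c)$. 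Projection then gives both $\{b\}$ and $\{c\}$.

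Suppose, for contradiction, $R\notin\wclose{\Gamma,\nu}$. Then \lemref{lem:fpol} applied with the pair $(b,c)$ in place of $(a,b)$ supplies, for every $\sigma\in\eclose{\Gamma,\nu}$, a support operation $f\in\pol(\Gamma)$ of some $\omega\in\fpol(\Gamma,\nu)$ with $\{f(b,c),f(c,b)\}\ne\{b,c\}$ and $\sigma(f(b,c))+\sigma(f(c,b))\le\sigma(b)+\sigma(c)$. Taking $\sigma=\nu$, a brief enumeration of the admissible pairs $(f(b,c),f(c,b))$ consistent with the cost inequality shows that $f$ meets the \typegmc off-diagonal condition on $\{b,c\}$ in every case. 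Since $\{a\}$ is now available, further unary cost functions isolating the pairs $\{a,b\}$ and $\{a,c\}$ are expressible, and applying \lemref{lem:fpol} similarly at those pairs should yield support operations respecting the \typegmc off-diagonal condition there as well. The min-core hypothesis controls the diagonal of any candidate binary polymorphism $f$: the unary map $x\mapsto f(x,x)$ is a polymorphism of $\Gamma$, and every non-identity such map with $\nu$-nonincreasing behaviour on this three-element domain is non-surjective, hence excluded by min-core. So $f(x,x)=x$, securing condition~2 of \typegmc.

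The principal obstacle is merging the pair-specific information from the three applications of \lemref{lem:fpol} into a \emph{single} binary polymorphism $f^\star\in\pol(\Gamma)$ that is simultaneously idempotent and satisfies the \typegmc off-diagonal condition on all three pairs. I would address this either by averaging the relevant fractional polymorphisms and extracting a common support via Motzkin's Transposition Theorem, or by a bounded case analysis over the finitely many possible off-diagonal profiles of a binary polymorphism consistent with all the extracted cost inequalities. Producing such an $f^\star$ contradicts the assumption that $(\Gamma,\nu)$ is not of type \typegmc, establishing $R\in\wclose{\Gamma,\nu}$ and hence $\Gamma^c\subseteq\wclose{\Gamma,\nu}$.
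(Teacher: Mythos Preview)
Your approach has a real gap at exactly the point you flag as the ``principal obstacle'': you need a \emph{single} binary polymorphism $f^\star$ that simultaneously satisfies the \typegmc\ condition on all three off-diagonal pairs, but \lemref{lem:fpol} only hands you, for each choice of $\sigma$ and each pair, a possibly different operation in the support of a possibly different fractional polymorphism. Neither ``averaging'' fractional polymorphisms nor a bare case split obviously produces such a common witness; the support of a convex combination is just the union of the supports, so you gain no new operations, and nothing forces the three supports to intersect in an operation with the right behaviour on all pairs at once. There is a second gap: to invoke \lemref{lem:fpol} at the pairs $\{a,b\}$ and $\{a,c\}$ you need the corresponding cross relations $\{(a,b),(b,a)\}$ and $\{(a,c),(c,a)\}$ to lie outside $\wclose{\Gamma,\nu}$, but your standing assumption is only about $\{(b,c),(c,b)\}$. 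If, say, $\{(a,c),(c,a)\}\in\wclose{\Gamma,\nu}$ you extract $\{c\}$ but still lack $\{b\}$, and the argument does not close. Finally, your diagonal step is not quite right: membership of $f$ in $\supp(\omega)$ for some $\omega\in\fpol(\Gamma,\nu)$ does not force $\nu(f(x,x))\le\nu(x)$ pointwise; only the weighted average over $\supp(\omega)$ enjoys this bound.

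The paper's argument is entirely different and avoids all of this. It never tries to manufacture a \typegmc\ witness. Instead it uses the contrapositive of a known structural fact---if $(\Gamma,\nu)$ is of type \typegmc\ then every $R\in\Gamma$ contains its tuple of coordinate-wise $\nu$-minima---to locate, via an induction on arity, a weighted-pp-definable two-element binary relation of one of seven explicit shapes $\gamma_1,\dots,\gamma_7$ (\lemref{lem:hascross}). Minimising $\nu$ on each coordinate of such a $\gamma_k$ immediately yields two distinct constants, one of which may be taken to be $a$. The third constant then drops out of the min-core hypothesis: the unary map collapsing the missing element onto $a$ is $\nu$-nonincreasing and non-surjective, hence not a polymorphism; a witnessing relation, after substituting the two available constants, wpp-defines a unary set whose $\nu$-minimum is exactly the missing element.
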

The proof of \thmref{thm:minsol} contains a somewhat lengthy case-analysis and is deferred to \sectref{s:thm:minsol}.
The case-analysis splits the proof into cases depending on what unary relations that are weighted pp-definable in $(\Gamma,\nu)$.
In each case it is essentially shown that, 
unless a two-element subset $\{x,y\} \subseteq D$ is definable \st \minsol$(\Gamma \cup \wclose{\Gamma,\nu} \cap \binopers,\nu)|_{\{x,y\}}$ is \class{NP}-hard,
in which case also \minsol$(\Gamma,\nu)$ is \class{NP}-hard,
the language $(\Gamma \cup S,\nu)$ is of type \typegmc, \typebsm\ or \typegwtp\ for some $S \subseteq 2^D$.

If $(\Gamma \cup S,\nu)$ is a min-core and of type \typegwtp\ (and not of type \typegmc), then from \lemref{lem:hasconst} it follows that \csp$(\Gamma^c \cup S)$ $\le_p$ \minsol$(\Gamma^c \cup S,\nu)$ $\le_p$ \minsol$(\Gamma \cup S,\nu)$.
Since \minsol$(\Gamma,\nu)$ is a restricted variant of \minhom$(\Gamma,\Delta)$, we therefore, from \thmrefs{thm:wt:p}, \ref{thm:bsm:sl:p} \ref{thm:gmc:p} and~\ref{thm:minsol}, obtain the following.
\begin{theorem}
\label{thm:minsol:2}
Let $(\Gamma,\nu)$ be a language over a three-element domain $D$ and $\nu : D \to \qplus$ be injective.
\minsol$(\Gamma,\nu)$ is in \class{PO} if $(\Gamma,\nu)$ has a min-core $(\Gamma',\nu')$ that is of type \typebsm\ or \typegmc, or if there is $S \subseteq 2^D$ \st \csp$((\Gamma')^c \cup S)$ is in \class P and $(\Gamma' \cup S,\nu')$ is of type \typegwtp\@.
Otherwise \minsol$(\Gamma,\nu)$ is \class{NP}-hard.
\end{theorem}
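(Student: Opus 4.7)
The plan is to combine the min-core reduction of \sectref{s:prelim} with \thmrefs{thm:wt:p}, \ref{thm:bsm:sl:p}, \ref{thm:gmc:p} and~\ref{thm:minsol}, together with the reduction chain established in the paragraph preceding the statement. First I would reduce to a min-core: by the min-core theorem of \sectref{s:prelim}, \minsol$(\Gamma,\nu)$ is in \class{PO} (respectively, \class{NP}-hard) iff \minsol$(\Gamma',\nu')$ is, for any min-core $(\Gamma',\nu')$ of $(\Gamma,\nu)$.

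The tractability direction is a direct dispatch. If $(\Gamma',\nu')$ is of type \typebsm\ apply \thmref{thm:bsm:sl:p}; if of type \typegmc\ apply \thmref{thm:gmc:p}; and if some $S\subseteq 2^D$ makes \csp$((\Gamma')^c\cup S)\in\class{P}$ and $(\Gamma'\cup S,\nu')$ of type \typegwtp, apply \thmref{thm:wt:p}. In each case \minhom$(\Gamma',\nu')\in\class{PO}$, and hence \minsol$(\Gamma',\nu')\in\class{PO}$.

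For hardness, assume $(\Gamma',\nu')$ is neither of type \typebsm\ nor of type \typegmc, and no $S$ makes $(\Gamma'\cup S,\nu')$ of type \typegwtp\ with \csp$((\Gamma')^c\cup S)\in\class{P}$. If additionally no $S$ makes $(\Gamma'\cup S,\nu')$ of any of the three types, \thmref{thm:minsol} applies directly. Otherwise I would rule out the \typebsm\ and \typegmc\ cases using monotonicity: their defining conditions in \defref{def:names} involve only $\pol(\Gamma'\cup S)\subseteq\pol(\Gamma')$ and the unchanged $\nu'$, so if some $S$ placed $(\Gamma'\cup S,\nu')$ in one of these types then $(\Gamma',\nu')$ itself would already be of that type, contradicting the assumption. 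The only remaining possibility is that some $S$ makes $(\Gamma'\cup S,\nu')$ of type \typegwtp\ while \csp$((\Gamma')^c\cup S)$ is \class{NP}-hard. Since min-coreness and failure of \typegmc\ are inherited by $(\Gamma'\cup S,\nu')$, \lemref{lem:hasconst} supplies the reduction chain \csp$((\Gamma')^c\cup S)\le_p\minsol((\Gamma')^c\cup S,\nu')\le_p\minsol(\Gamma'\cup S,\nu')$ from the preceding paragraph, giving \class{NP}-hardness of \minsol$(\Gamma'\cup S,\nu')$.

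The main obstacle is the final transfer from \minsol$(\Gamma'\cup S,\nu')$ back to \minsol$(\Gamma',\nu')$. For this I need the $S$ above to satisfy $S\subseteq\wclose{\Gamma',\nu'}$ so that each relation in $S$ is weighted pp-definable in $(\Gamma',\nu')$; this is guaranteed by the case-analysis used to prove \thmref{thm:minsol}, which is organised around the unary relations in $\wclose{\Gamma',\nu'}$ and therefore always furnishes such an $S$.
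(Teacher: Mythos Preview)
Your proposal follows the paper's own (very brief) derivation almost exactly: reduce to a min-core, dispatch tractability via \thmrefs{thm:wt:p}, \ref{thm:bsm:sl:p} and~\ref{thm:gmc:p}, and pull hardness from \thmref{thm:minsol} together with the reduction chain in the paragraph preceding the statement. Your monotonicity observation for \typebsm\ and \typegmc\ is correct and is precisely what collapses the ``some $S$'' in \thmref{thm:minsol} to the $S=\emptyset$ case for those two types.

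The one point to flag is your final paragraph. The claim that the case analysis for \thmref{thm:minsol} always furnishes $S\subseteq\wclose{\Gamma',\nu'}$ is not accurate. That analysis is carried out under the standing assumption $\Gamma^+\subseteq\Gamma$, and several branches conclude ``$(\Gamma\cup\{\{a,b\}\},\nu)$ is of type \typegwtp'' (and analogously with $\{a,c\}$ or both) in sections where the relevant two-element set is explicitly assumed \emph{not} to lie in $\Gamma$, hence not in $\wclose{\Gamma,\nu}$. So the reduction \minsol$(\Gamma'\cup S,\nu')\le_p\minsol(\Gamma',\nu')$ is not available on the grounds you give. The paper's own text is just as terse here (it simply writes ``we therefore obtain''), so this is a subtlety the paper does not spell out either rather than a gap you have introduced; but the specific justification you offer for the transfer step does not hold.
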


The following provides an example of use of the classification.
Jonsson, Nordh and Thapper~\cite{maxsolgraph} classified the complexity of \minsol$(\{R\},\nu)$ for all valuations $\nu$ and binary symmetric relations $R$ (\ie graphs) on the three-element domain.
One relation stood out among the others, namely:
$H_5 = \{ (a,c),$ $(c,a),$ $(b,b),$ $(b,c),$ $(c,b),$ $(c,c) \}$,
where $\nu(a)<\nu(b)<\nu(c)$.
If $\nu(a)+\nu(c)<2\nu(b)$ then $\pr_1( \argmin_{(x,y) \in H_5} (\nu(x)+\nu(y)) ) = \{a,c\}$ which means that the relation $\smallmispic c a c a \in \wclose{\{H_5\},\nu}$, and \minsol$(\{H_5\},\nu)$ is \class{NP}-hard by a reduction from the maximum independent set problem.
Otherwise the problem is in \class{PO}.
This was determined in~\cite{maxsolgraph} by linking the problem with, and generalising algorithms for, the critical independent set problem~\cite{cmis:1}.
We note that $\sqcup, \sqcap \in \pol(\{H_5\})$, where $\sqcup, \sqcap$ are commutative idempotent binary operations \st \smash[b]{$\sqcap \in \doa a b \doa c b$, $\sqcup \in \upa a b \upa c b$} and $a \sqcap c = a \sqcup c = b$. This means that $(\{H_5\},\nu)$ is of type \typebsm\@.

\section {Proof of \thmref{thm:wt:p}}
\label{s:proof:1}
Let $I=(V,C,w)$ be an instance of \minhom$(\Gamma, \Delta)$ with measure $m$.
Since \csp$(\Gamma^c)$ is in \class P we can, in polynomial-time, compute the reduced domain $D_v = \{\phi(v) : \phi \in \sol(I)\}$ for every $v \in V$.
Note that $D_v \in \close{\Gamma}$.

Let $f_1,f_2$ be a generalised weak tournament pair on $(A,B)$.
If there for some $v \in V$ is some $\{x,y\} \in  A$ \st $\{x,y\} \subseteq D_v$, then we know that there is $\{a,b\} \in A$ so that $D_v \setminus \{b\} \in \close{\Gamma}$ and $(\Gamma,\Delta)$ admits an \domp a b binary fractional polymorphism $\omega$.
Assume that $\phi_a$ and $\phi_b$ are \st $m(\phi_a) = \min\{m(\phi) : \phi \in \sol(I), \phi(v)=a\}$ and $m(\phi_b) = \min\{m(\phi) : \phi \in \sol(I), \phi(v)=b\}$.

Certainly $g(\phi_a,\phi_b) \in \sol(I)$ for every $g \in \supp(\omega)$.
Because $\omega \in \fpol(\Gamma,\Delta)$ it follows that
\ba
& \ssum{g \in \opers 2} \omega(g) m( g(\phi_a,\phi_b) ) = \ssum{g \in \opers 2} \, \omega(g) \, \ssum{x \in V, \nu \in \Delta} w(x,\nu) \nu( g(\phi_a,\phi_b)(x) ) \\
& \quad =   \ssum{x \in V, \nu \in \Delta} w(x,\nu) \ssum{g \in \opers 2} \omega(g) \nu( g(\phi_a(x),\phi_b(x)) ) \\
& \quad \le \ssum{x \in V, \nu \in \Delta} w(x,\nu) \frac{1}{2} (\nu(\phi_a(x)) + \nu(\phi_b(x))) = \frac{1}{2} ( m(\phi_a) + m(\phi_b) ).
\ea
Since $\omega$ is \domp a b there are functions $\rho, \sigma : \binopers \to \qplus$ \st
$\omega = \rho + \sigma$, $\sum_{g \in \binopers} \rho(g) = \sum_{g \in \binopers} \sigma(g) = \frac{1}{2}$, $g(a,b)=a$ for every $g \in \supp(\rho)$, and $f(a,b)\ne b$ for some $f \in \supp(\sigma)$.
This implies that
\begin{flalign*}
         && \smash{ \frac{1}{2} m( \phi_a ) } + \ssum{g \in \opers 2} \sigma(g) m( g(\phi_a,\phi_b) )  & \le \ssum{g \in \opers 2} \omega(g) m( g(\phi_a,\phi_b) ),&\\
\rlap{so}&& \vphantom{\sum_{i}}\smash{          \ssum{g \in \opers 2} 2\sigma(g) m( g(\phi_a,\phi_b) )}& \le m(\phi_b),                                                 &
\end{flalign*}
which in turn (since $\sum_{g \in \binopers} 2\sigma(g)=1$ and $f(a,b)\ne b$ for some $f \in \supp(\sigma)$) implies that there is $\phi^* \in \sol(I)$ \st $m(\phi^*) \le m(\phi_b)$ and $\phi^*(v) \ne b$.
Hence $b$ can be removed from $D_v$ without increasing the measure of an optimal solution.
To accomplish this the constraint $(v, D_v \setminus \{b\})$ is added.

We repeat this procedure until $\binom{D_v}{2} \cap A = \emptyset$ for every $v \in V$.
Clearly this takes at most $|D|\cdot|V|$ iterations.

Let $f'_1 = f_1[f_1,\overline{f_1}]$ and $f'_2 = f_2[f_2,\overline{f_2}]$.
Note that $f'_1|_{\{x,y\}}$ and $f'_2|_{\{x,y\}}$ are different conservative, idempotent and commutative operations if $\{x,y\} \in B \setminus A$ and projections if $\{x,y\} \in \binom D 2 \setminus B$.
If $f_1|_{\{x,y\}}=\pr_1$ for some $\{x,y\}$, then $f'_1|_{\{x,y\}} = f_1|_{\{x,y\}} = \pr_1$, and if $f_1|_{\{x,y\}}=\pr_2$, then $f'_1|_{\{x,y\}} = \overline{f_1}|_{\{x,y\}} = \overline{\pr_2} = \pr_1$.
So $f'_1|_{\{x,y\}} = \pr_1$ for every $\{x,y\} \in \binom D 2 \setminus B$.
The same arguments apply also for $f'_2$.

Clearly $f'_1|_{D_v}$ and $f'_2|_{D_v}$ are conservative operations for every $v\in V$.
Let $g \in \pol(\Gamma)$ be a ternary idempotent operation that is arithmetical on $\binom D 2 \setminus B$.
Define $g'$ through $g'(x,y,z) = g( f'_1(x, f'_1(y,z)), f'_1(y, f'_1(x,z)), f'_1(z, f'_1(x,y)) )$.
Since $f'_1 = \pr_1$ on $\binom D 2 \setminus B$ also $g'$ is arithmetical on $\binom D 2 \setminus B$.
Since $f'_1$ is conservative, commutative and idempotent on $B \setminus A$ we have $f'_1(x,f'_1(x,y))=f'_1(x,f'_1(y,x))=f'_1(y,f'_1(x,x)) \in \{x,y\}$ for every $\{x,y\} \in B \setminus A$, so $g'$ is conservative on $\binom D 2 \setminus A$.
Note that $f'_1, f'_2, g' \in \pol(\Gamma^+)$ where $\Gamma^+ = \Gamma \cup \{S : S \subseteq D_v \text{ for some } v \in V\})$.
This together with the fact that only a constant number of subsets of $D$ exists means that the modified instance $I$ is easily turned into an instance of the multi-sorted version of \minhom$(\Gamma^+,\nabla_D)$,
where $\nabla_D$ is the set of all functions $D \to \bbbn$, and is solvable in polynomial time~\cite[Theorem 23]{rustem:minsol}.

\section {Proof of \propref{prop:use:dom}}
\label{s:proof:2}
For $\nu \in \Delta$ let $D_{\nu} = \{ x \in D : \nu(x)<\infty\}$.
Let $\Omega = \{f \in \binopers \cap \pol(\Gamma) : \nu(f(x,y))<\infty \text{ for every } \nu \in \Delta \text{ and } x,y \in D_{\nu} \}$,
$\Omega_1 = \{ f \in \Omega: f(a,b)=a \}$, $\Omega_2 = \{ f \in \Omega: f(a,b)=b \}$
and $\Omega_3 = \Omega \setminus (\Omega_1 \cup \Omega_2)$.
The language $(\Gamma,\Delta)$ admits a binary fractional polymorphism that is \domp a b if the following system has a solution $u_g\in \bbbq$, $g\in \Omega$.
\ba
& \ssum{g \in \Omega} u_g \nu( g(x,y) ) \le \frac{1}{2} ( \nu(x) + \nu(y) ) \text{ for } \nu \in \Delta, (x,y) \in D_{\nu}^2, \quad
- u_g \le 0 \text{ for } g \in \Omega,\\ 
& \vphantom{\sum_{i}}\smash{
  \ssum{g \in \Omega} u_g \le  1, \quad 
- \ssum{g \in \Omega} u_g \le -1, \quad
- \ssum{g \in \Omega_1} u_g \le -\frac{1}{2}, \quad \text{and} \quad
  \ssum{g \in \Omega_2} u_g < \frac{1}{2}
  }
\ea
If the system is unsatisfiable, then, by \thmref{thm:motzkin}, there are $v_{\nu,(x,y)}$, $o_g$, $w_1$, $w_2$, $w_3$, $z \in \qplus$ for $\nu \in \Delta, (x,y)\in D_{\nu}^2, g \in \Omega$ \st 
\begin{flalign*}
          && \ssum{\nu \in \Delta, (x,y)\in D_{\nu}^2} \nu(g(x,y)) v_{\nu,(x,y)} - o_g + w_1 - w_2 - w_3 &= 0, &\mathllap{g \in \Omega_1,}\\
          && \ssum{\nu \in \Delta, (x,y)\in D_{\nu}^2} \nu(g(x,y)) v_{\nu,(x,y)} - o_g + w_1 - w_2 + z   &= 0, &\mathllap{g \in \Omega_2,}\\
          && \ssum{\nu \in \Delta, (x,y)\in D_{\nu}^2} \nu(g(x,y)) v_{\nu,(x,y)} - o_g + w_1 - w_2       &= 0, &\mathllap{g \in \Omega_3,}\\
\rlap{and}&& \vphantom{\sum_{i}}\smash{ \ssum{\nu \in \Delta, (x,y)\in D_{\nu}^2} \frac{1}{2} (\nu(x)+\nu(y)) v_{\nu,(x,y)} + w_1 - w_2 - \frac{1}{2} w_3 + \frac{1}{2} z} &= \alpha, &
\end{flalign*}
where either $\alpha < 0$ or $\alpha = 0$ and $z>0$.
Hence, for every $g \in \Omega_1$ and $h \in \Omega_2$,
\ba
& \vphantom{\sum_{i}}\smash{ \ \ \ \ \ \ssum{\nu \in \Delta, (x,y)\in D_{\nu}^2} (\nu(x)+\nu(y)) v_{\nu,(x,y)} +  o_g + o_h = \ssum{\nu \in \Delta, (x,y)\in D_{\nu}^2} (\nu(g(x,y))+\nu(h(x,y))) v_{(x,y),\nu} + \alpha.}
\ea
Note that since $\pr_1 \in \Omega_1$ and $\pr_2 \in \Omega_2$ we must have $\alpha=0$, $o_{\pr_1} = o_{\pr_2} = 0$, and $z>0$.
This means that
\ba
& \min_{g \in \Omega_1} \ssum{\nu \in \Delta, (x,y)\in D_{\nu}^2} \nu(g(x,y)) v_{\nu,(x,y)}
= \ssum{\nu \in \Delta, (x,y)\in D_{\nu}^2} \nu(\pr_1(x,y)) v_{\nu,(x,y)} = - w_1 + w_2 + w_3 \\
& \vphantom{\sum_{i}}\smash{ \quad > - w_1 + w_2 - z = \ssum{\nu \in \Delta, (x,y)\in D_{\nu}^2} \nu(\pr_2(x,y)) v_{\nu,(x,y)}
= \min_{g \in \Omega_2} \ssum{\nu \in \Delta, (x,y)\in D_{\nu}^2} \nu(g(x,y)) v_{\nu,(x,y)}. }
\ea

Create an instance $I$ of \minhom$(\Gamma,\Delta)$ with variables $D^2$, and objective 
\ba
m(\phi) &=  \ssum{\nu \in \Delta, (x,y) \in D_{\nu}^2} v_{\nu,(x,y)} \nu(\phi(x,y))
 + \epsilon \ssum{\nu \in \Delta, (x,y) \in D_{\nu}^2} \nu(\phi(x,y)),
\ea
where $\epsilon > 0$ is choosen small enough so that $\phi \in \argmin_{\phi' \in \Omega_1} m(\phi')$ implies
$\phi \in \argmin_{\phi' \in \Omega_1} \sum_{\nu \in \Delta, (x,y) \in D_{\nu}^2} v_{\nu,(x,y)} \nu(\phi(x,y))$.
Such a number $\epsilon$ can always be found.
Note that a solution $\phi$ to $I$ with finite measure is a function $D^2 \to D$ \st $\nu(\phi(x,y))<\infty$ for every $\nu \in \Delta$ and $(x,y) \in D_{\nu}^2$.

Pick, for every $g \in \binopers \setminus \pol(\Gamma)$, a relation $R_g \in \Gamma$ \st $g$ does not preserve $R_g$.
Add for each pair of tuples $t^1,t^2 \in R_g$ the constraint $(( (t^1_1,t^2_1), \dots, (t^1_{\ar(R_g)},t^2_{\ar(R_g)})), R_g)$.
This construction is essentially the second order indicator problem~\cite{indicator}.
Now a solution to $I$ is a binary polymorphism of $\Gamma$.
Hence, if $\phi$ is a solution to $I$ with finite measure, then $\phi \in \Omega$.
Clearly $\pr_1$ and $\pr_2$ satisfies all constraints and are solutions to $I$ with finite measures.
Let $\nu(x) = \min_{g \in \sol(I) : g(a,b)=x} m(g)$.
Note that $\nu \in \eclose{\Gamma,\Delta}$ and $\infty > \nu(a) > \nu(b)$.
This completes the proof.

\section {Proof of \thmref{thm:conservative}}
\label{s:proof:3}
The proof follows the basic structure of the arguments given in~\cite{rustem:minhom}.
A key ingredient of our proof will be the use of \thmref{thm:wt:p} and \propref{prop:use:dom}.

Let $\Gamma^+ = \Gamma \cup (2^D \cup 2^{D^2}) \cap \wclose{\Gamma,\Delta}$.
Note that if $(\Gamma^+, \Delta)$ is of type \typegwtp, then so is also $(\Gamma \cup S,\Delta)$, for some $S \subseteq 2^D$.
Since \minhom$(\Gamma^+, \Delta)$ is polynomial-time reducible to \minhom$(\Gamma,\Delta)$ we therefore assume that $\Gamma^+ \subseteq \Gamma$.
We also assume $\Gamma^c \subseteq \Gamma$.
Obviously \csp$(\Gamma)$ is polynomial-time reducible to \minhom$(\Gamma,\Delta)$.
In what follows we therefore assume that \csp$(\Gamma)$ is in \class{P}.

Let $B \subseteq \binom D 2$ be a minimal set \st
all binary operations in $\pol(\Gamma)$ are projections on $\binom D 2 \setminus B$
and for every $\{a,b\} \in \binom D 2 \setminus B$ there is a ternary operation in $\pol(\Gamma)$ that is arithmetical on $\{\{a,b\}\}$.
Then, let $A$ be a maximal subset of $B$ \st for every $\{a,b\} \in A$ there is $\omega \in \fpol(\Gamma,\Delta)$ \st $\omega$ is either \domp a b or \domp b a.
Let $T$ be the undirected graph $(M,P)$, where
$M = \{ (a,b) : \{a,b\} \in \Gamma \cap B \setminus A \}$ and
$P = \{ ((a,b), (c,d)) \in M^2 : \pol(\Gamma) \cap \doa a b \doa c d = \emptyset \}$.

By \propref{prop:use:dom} we know that for every $(a,b) \in M$, there are $\nu,\tau \in \eclose{\Gamma,\Delta}$ \st $\nu(b)<\nu(a)<\infty$ and $\tau(a)<\tau(b)<\infty$.
By the classification of \minhom\ on two-element domains, see \eg~\cite[Theorem~3.1]{rustem:minhom}, and by the fact that
if $f,m \in \pol(\Gamma)$ are idempotent, $m$ is arithmetical on $\{\{x,y\}\}$ and \smash[b]{$f \in \upa x y$}, then $m'(u,v) = m(u,f(u,v),v)$ satisfies \smash[b]{$m' \in \doa x y$},
we have the following.
\begin{lemma}
\label{lem:get:hardness}
Either; for every $(a,b) \in M$ there are binary operations $f,g \in \pol(\Gamma)$ \st $f|_{\{a,b\}}$ and $g|_{\{a,b\}}$ are two different idempotent, conservative and commutative operations, or \minhom$(\Gamma,\Delta)$ is \class{NP}-hard.
\end{lemma}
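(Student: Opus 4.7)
Fix $(a,b) \in M$. Since $\{a,b\} \in B \setminus A$ and $A$ is chosen maximal, no binary fractional polymorphism of $(\Gamma,\Delta)$ can be \domp{a}{b} or \domp{b}{a}. Applying \propref{prop:use:dom} in both directions produces $\nu, \tau \in \eclose{\Gamma,\Delta}$ with $\infty > \nu(a) > \nu(b)$ and $\infty > \tau(b) > \tau(a)$. Because $(\Gamma,\Delta)$ is conservative the unary relation $\{a,b\}$ lies in $\Gamma$, so the restricted language $(\Gamma,\Delta)|_{\{a,b\}}$ is a two-element conservative \minhom-language in whose expressive power unary costs with strict preferences in both directions between $a$ and $b$ are available.

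\textbf{Main step.} Feed $(\Gamma,\Delta)|_{\{a,b\}}$ into the two-element \minhom\ classification \cite[Theorem~3.1]{rustem:minhom}. If it returns \class{NP}-hardness then \minhom$(\Gamma,\Delta)$ is also \class{NP}-hard, since any instance of the restricted problem lifts to an instance of the full one by attaching the unary constraint $\{a,b\}$ to every variable, and the second clause of the lemma holds. Otherwise, the presence of opposite-direction unary costs rules out the tractable sub-cases of the classification that exploit a one-directional bias; the remaining sub-cases force the existence of a binary $f \in \pol(\Gamma)$ whose restriction to $\{a,b\}$ is idempotent, conservative and commutative. Swapping the roles of $a$ and $b$ if necessary, assume $f \in \upa{a}{b}$.

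\textbf{Producing the partner.} If $\pol(\Gamma)$ contains a second binary operation $g \in \doa{a}{b}$ with $g|_{\{a,b\}}$ idempotent, conservative and commutative, we are done. Otherwise we invoke the displayed trick: if an idempotent ternary $m \in \pol(\Gamma)$ arithmetical on $\{\{a,b\}\}$ is available, then $m'(u,v) = m(u,f(u,v),v)$ lies in $\doa{a}{b}$, is conservative (since $\pol(\Gamma)$ is conservative by $2^D \subseteq \Gamma$), and its restriction to $\{a,b\}$ is idempotent, conservative and commutative, supplying the needed partner. Existence of such an $m$ for $\{a,b\} \in B$ is the delicate step; one appeals to the minimality of $B$, which for $\{a,b\} \in B$ furnishes either a non-projection binary polymorphism on $\{a,b\}$ (already giving the missing second semilattice by conservativity of $\pol(\Gamma)$) or the absence of an arithmetical on the pair, the latter being incompatible with the tractable sub-case selected by the two-element classification under mixed-preference costs and therefore leading back to the \class{NP}-hard branch.

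\textbf{Main obstacle.} The principal difficulty is extracting the second conservative-commutative binary polymorphism; the two-element classification on its own may directly supply only one direction. The displayed arithmetical construction resolves this whenever an arithmetical ternary on $\{\{a,b\}\}$ exists, so the technical heart of the proof is reconciling the mere existence of $f \in \upa{a}{b}$ coming from the tractable side of the two-element classification with the structural dichotomy enforced by the minimality of $B$ on the pair $\{a,b\}$, in order to guarantee that the appropriate arithmetical (or a directly usable second binary) is available in $\pol(\Gamma)$.
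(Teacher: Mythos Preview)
Your overall strategy coincides with the paper's: invoke \propref{prop:use:dom} to obtain unary costs preferring $a$ over $b$ and vice versa, then apply the two-element \minhom\ dichotomy \cite[Theorem~3.1]{rustem:minhom} to the language restricted to $\{a,b\}$ (which is legitimate because $\{a,b\}\in\Gamma$ by conservativity and the relevant relations on $\{a,b\}$ lie in $\close\Gamma$), together with the arithmetical trick $m'(u,v)=m(u,f(u,v),v)$. The paper's own proof is nothing more than the sentence preceding the lemma, so you have essentially reproduced it.

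Where your write-up goes astray is in the ``Producing the partner'' and ``Main obstacle'' paragraphs. Your case split via minimality of $B$ does not close: in the branch where minimality yields a non-projection binary on $\{a,b\}$, that operation is \emph{some} semilattice on $\{a,b\}$, not necessarily the one complementary to the $f$ you already have, so it need not be the ``missing second semilattice''. In the other branch, the mere absence of an arithmetical on $\{\{a,b\}\}$ is not by itself ``incompatible with the tractable sub-case'': a language with both $\min$ and $\max$ but no arithmetical is perfectly tractable, so you cannot route this branch to \class{NP}-hardness as stated.

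The cleaner reading (and, I believe, what the paper intends) is that once both preference directions are present on $\{a,b\}$, the two-element dichotomy already forces $\pol(\Gamma)|_{\{a,b\}}$ to contain \emph{both} conservative commutative operations or else the restricted problem is \class{NP}-hard; the arithmetical trick is only there to unpack one of the tractable outcomes of Takhanov's theorem (a semilattice together with an arithmetical on the pair) into the pair of semilattices required by the lemma. Your detour through the minimality of $B$ is unnecessary and, as argued, does not do the work you assign to it.
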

\begin{lemma}[{\cite[Theorem~5.3]{rustem:minhom}}]
\label{lem:bipartite}
If $T$ is bipartite, then there are binary operations $f,g \in \pol(\Gamma)$ \st for every $(a,b) \in M$,
$f|_{\{a,b\}}$ and $g|_{\{a,b\}}$ are different idempotent conservative and commutative operations, or \minhom$(\Gamma,\Delta)$ is \class{NP}-hard.
\end{lemma}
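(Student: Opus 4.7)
The plan is to promote the local pairwise data supplied by \lemref{lem:get:hardness} and the non-edges of $T$ to two global polymorphisms of $\Gamma$, using the bipartition of $T$ to orient the local choices consistently. First I would apply \lemref{lem:get:hardness}: either \minhom$(\Gamma,\Delta)$ is already \class{NP}-hard (and the second disjunct of the conclusion holds) or for every $\{a,b\}\in M$ there are two distinct idempotent conservative commutative binary polymorphisms of $\Gamma$ on $\{a,b\}$. On a two-element set there are exactly two such operations, so specifying one amounts to prescribing the common value $f(a,b)=f(b,a)\in\{a,b\}$, i.e.\ to choosing an element of $\pol(\Gamma)\cap\doa{a}{b}$ or of $\pol(\Gamma)\cap\doa{b}{a}$.

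Next fix a bipartition $M=V_1\cup V_2$ of $T$. Since $\doa{a}{b}\cap\doa{b}{a}=\emptyset$ whenever $a\ne b$, the pair $((a,b),(b,a))$ is an edge of $T$ for every $\{a,b\}\in M$, so any bipartition separates the two orientations of each two-element subset; in particular exactly one of $(a,b),(b,a)$ lies in $V_1$. The non-edge condition within $V_1$ says precisely that for any $(a,b),(c,d)\in V_1$ there is a polymorphism in $\pol(\Gamma)\cap\doa{a}{b}\doa{c}{d}$, and analogously for $V_2$. So the local prescriptions induced by the colour classes are pairwise compatible.

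The main step is to boost pairwise compatibility inside $V_1$ to a single polymorphism $f\in\pol(\Gamma)$ with $f\in\doa{a}{b}$ simultaneously for every $(a,b)\in V_1$, and analogously to build $g$ for $V_2$. The plan is to enumerate $V_1$ as $(a_1,b_1),\dots,(a_k,b_k)$ and construct $f_n$ satisfying the first $n$ constraints by induction on $n$: given $f_n$ and a pairwise witness $h$ for $(a_{n+1},b_{n+1})$, combine them through a term of $\pol(\Gamma)$ (using also the arithmetical and conservative auxiliaries supplied by the setup of \sectref{s:proof:3}) to produce $f_{n+1}$ that acquires the new constraint without losing any of the old ones. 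The point is that on each $\{a_i,b_i\}$ the idempotent conservative commutative operation with the prescribed direction is unique, so any conservative composition that behaves correctly on the new pair and agrees with $f_n$ on the earlier ones automatically preserves the old constraints. Setting $f=f_k$ and building $g$ analogously for $V_2$, restricting to any $\{a,b\}\in M$ and using that $(a,b),(b,a)$ lie in opposite colour classes yields two different idempotent conservative commutative operations on $\{a,b\}$, as required.

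The main obstacle is exactly this pairwise-to-global upgrade: pairwise consistency does not automatically imply simultaneous consistency for arbitrary clones, and the argument relies on the very restrictive form of the constraints (one of only two possible operations on a two-element subset) together with conservative-algebra techniques. The complete construction is that of~\cite[Theorem~5.3]{rustem:minhom}, from which the lemma is quoted.
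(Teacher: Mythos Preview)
The paper does not supply its own proof of this lemma; it is simply quoted as Theorem~5.3 of~\cite{rustem:minhom}. Your outline is a faithful sketch of the strategy behind that cited result: the observation that $((a,b),(b,a))$ is always an edge of $T$ (since $\doa a b\cap\doa b a=\emptyset$) so a bipartition separates the two orientations of each pair, that non-edges within a colour class give pairwise $\doa{\cdot}{\cdot}$-compatibility, and that the remaining work is the pairwise-to-global upgrade. You correctly flag this last step as the real content and defer it to~\cite{rustem:minhom}, which is precisely what the paper does. There is therefore nothing to compare; your proposal and the paper agree in citing the result rather than proving it.

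One minor notational slip: you write ``for every $\{a,b\}\in M$'', but $M$ consists of ordered pairs $(a,b)$, not unordered two-element sets.
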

\begin{lemma}[{\cite[Theorem~5.4]{rustem:minhom}}]
\label{lem:arithmetical}
Let $C \subseteq \binom D 2$. If $C \subseteq \Gamma$ and for each $\{a,b\} \in C$ there is a ternary operation $m^{\{a,b\}} \in \pol(\Gamma)$ that is arithmetical on $\{\{a,b\}\}$, then there is $m \in \pol(\Gamma)$ that is arithmetical on $C$.
\end{lemma}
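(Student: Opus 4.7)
My plan is to prove \lemref{lem:arithmetical} by induction on $|C|$. The base case $|C|=1$ is immediate: take $m := m^{\{a,b\}}$ for the unique pair $\{a,b\} \in C$. For the inductive step, suppose we have constructed $p \in \pol(\Gamma)$ arithmetical on a proper subset $C' \subsetneq C$, and pick $\{a,b\} \in C \setminus C'$. Writing $q := m^{\{a,b\}}$, the aim is to produce $r \in \pol(\Gamma)$ arithmetical on $C' \cup \{\{a,b\}\}$ as a suitable term in $p$ and $q$.

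The key structural observation is that because every $\{x,y\} \in C$ lies in $\Gamma$, any polymorphism of $\Gamma$ restricts on $\{x,y\}^3$ to an idempotent ternary operation into $\{x,y\}$. On such a two-element set, the six arithmetical equations force the unique Pixley operation on the six non-constant tuples of $\{x,y\}^3$; consequently ``arithmetical on $\{x,y\}$'' is equivalent to having the Pixley restriction there, and only these six values per pair need to be verified.

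For the composition I would try an iterated formula such as
\[
r(x,y,z) := p\bigl(q(x,y,z),\,p(x,y,z),\,q(x,y,z)\bigr)
\]
or, if this falls short, a further iterate like
\[
p\bigl(q(x,p(x,y,z),z),\,p(x,y,z),\,q(x,p(x,y,z),z)\bigr).
\]
For a pair $\{u,v\} \in C'$ the restriction of $p$ is Pixley, while that of $q$ is merely idempotent into $\{u,v\}$; the verification reduces to checking that the Pixley values of $p$ ``override'' the arbitrary values of $q$ on those six inputs. On $\{a,b\}$ the roles of $p$ and $q$ are reversed.

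The main obstacle is precisely this robustness requirement: a single composition must yield the Pixley table on each pair in $C'$ against every possible idempotent restriction of $q$, and dually for $\{a,b\}$. Overcoming it boils down to a finite case analysis over idempotent ternary operations on a two-element set; if no fixed formula suffices in one step, an iterated construction of depth $O(|C|)$, along the lines of \cite[Theorem~5.4]{rustem:minhom}, will stabilise all pairs simultaneously.
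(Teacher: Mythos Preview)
The paper does not prove this lemma at all; it is quoted verbatim from \cite[Theorem~5.4]{rustem:minhom}, so there is no in-paper argument to compare against. Your inductive plan --- combine a polymorphism $p$ arithmetical on $C'$ with $q=m^{\{a,b\}}$ via a term in $p,q$ --- is the natural strategy and is indeed how such ``local-to-global'' results are typically established.

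That said, the concrete terms you propose do not work. Take a pair $\{u,v\}\in C'$ on which $p$ is Pixley and let $q|_{\{u,v\}}=\pr_3$ (which is a perfectly legal idempotent ternary operation into $\{u,v\}$). For your first candidate,
\[
r(u,v,v)=p\bigl(q(u,v,v),\,p(u,v,v),\,q(u,v,v)\bigr)=p(v,u,v)=v
\]
by the Pixley identity $p(x,y,x)=x$, so $r$ is not arithmetical on $\{u,v\}$. The second candidate fails the same way: $p(u,v,v)=u$, hence $r(u,v,v)=p(q(u,u,v),u,q(u,u,v))=q(u,u,v)=v$. In both cases the outer $p$ collapses to its first argument via $p(x,y,x)=x$, handing control back to $q$, which is exactly what you cannot afford on pairs in $C'$.

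So the gap is real: the entire mathematical content of the lemma is the existence of a combining term that is robust against \emph{every} idempotent restriction of the ``other'' operation, and you have not exhibited one. Your fallback --- ``an iterated construction \ldots\ along the lines of \cite[Theorem~5.4]{rustem:minhom}'' --- is just a pointer back to the source the lemma is quoted from, not an argument. To make this a proof you must either produce an explicit term and verify all six Pixley values on both kinds of pairs against an arbitrary idempotent partner, or reproduce Takhanov's construction in enough detail to see why it stabilises.
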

So, if $T$ is bipartite and $(\Gamma,\Delta)$ is conservative, there is a generalised weak tournament pair on $(A,B)$ and an arithmetical polymorphism on $\binom D 2 \setminus B$.
Here $(\Gamma, \Delta)$ is of type \typegwtp, and by \thmref{thm:wt:p}, we can conclude that \minhom$(\Gamma,\Delta)$ is polynomial-time solvable.

This following lemma finishes the proof of \thmref{thm:conservative}.
A corresponding result, for the case when $\Delta$ is the set of all functions $D \to \bbbn$, is also achieved in~\cite{rustem:minhom}.
Our proof strategy is somewhat different from that in~\cite{rustem:minhom}, though.
\begin{lemma}
\label{lem:notbipartite}
If $T$ is not bipartite, then \minhom$(\Gamma,\Delta)$ is
\class{NP}-hard.
\end{lemma}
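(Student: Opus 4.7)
My plan is to produce an \class{NP}-hardness reduction by realizing a known \class{NP}-hard two-element \minhom\ instance inside $(\wclose{\Gamma,\Delta},\eclose{\Gamma,\Delta})$, and then invoking the two-element \minhom\ dichotomy~\cite[Theorem~3.1]{rustem:minhom} to transfer the hardness back to \minhom$(\Gamma,\Delta)$. Throughout I will assume, via \lemref{lem:get:hardness}, that for each $(a,b) \in M$ there exist polymorphisms of $\Gamma$ whose restrictions to $\{a,b\}$ realize the two different commutative idempotent conservative binary operations; otherwise the conclusion holds immediately.

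Next I will fix a shortest odd cycle $(a_1,b_1),(a_2,b_2),\ldots,(a_{2k+1},b_{2k+1})$ in $T$ (with indices read mod $2k+1$). For each vertex $(a_i,b_i)$, since $\{a_i,b_i\} \in B \setminus A$ and $A$ was chosen maximal, $(\Gamma,\Delta)$ admits no binary fractional polymorphism that is \domp{a_i}{b_i} and none that is \domp{b_i}{a_i}; \propref{prop:use:dom} then supplies functions $\nu_i,\tau_i \in \eclose{\Gamma,\Delta}$ with $\infty > \nu_i(a_i) > \nu_i(b_i)$ and $\infty > \tau_i(b_i) > \tau_i(a_i)$. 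For each edge $\{(a_i,b_i),(a_{i+1},b_{i+1})\}$, the defining condition $\pol(\Gamma) \cap \doa{a_i}{b_i}\doa{a_{i+1}}{b_{i+1}} = \emptyset$ will be translated, via a second-order indicator-problem argument in the style of the proof of \propref{prop:use:dom}, into a concrete binary relation $R_i \in \close{\Gamma}$ on $\{a_i,b_i\} \times \{a_{i+1},b_{i+1}\}$ that obstructs a simultaneous ``down'' on both pairs.

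Chaining the $R_i$ around the odd cycle by existentially identifying shared variables will produce a pp-definable relation on two variables ranging over $\{a_1,b_1\}$ whose structure inherits the cyclic obstruction; the key point is that odd parity rules out any consistent $2$-colouring of the cycle, so the resulting relation cannot simultaneously be preserved by the two commutative operations provided by \lemref{lem:get:hardness} on $\{a_1,b_1\}$. Combined with one of the sharp unary costs $\nu_1$ or $\tau_1$, this places the induced two-element instance on the \class{NP}-hard side of the two-element \minhom\ dichotomy, and hardness transfers upward by the standard reduction from $\wclose{\Gamma,\Delta}$ and $\eclose{\Gamma,\Delta}$.

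The main obstacle will be the faithful construction of the edge relations $R_i$ from the emptiness conditions on the down-arrow polymorphism intersections, and then the verification that their cyclic conjunction around $T$ genuinely lands on the hard side of the two-element classification rather than collapsing. The commutative polymorphisms guaranteed by \lemref{lem:get:hardness} serve as ``glue'' when combining constraints across pairs but also threaten to trivialize the final relation; here I expect the odd parity of the cycle to be decisive, since for an even cycle a consistent orientation of the pairs would let these commutative polymorphisms preserve the chained relation, whereas odd parity forces an inconsistency that witnesses the absence of any tractable polymorphism on $\{a_1,b_1\}$.
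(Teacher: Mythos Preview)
Your overall strategy—extract a binary relation for each edge of the odd cycle, compose them around the cycle, and land in a hard two-element situation—is the same as the paper's. The gap is that you leave the decisive step, the precise shape of the edge relations $R_i$, as an ``obstacle'' to be worked out later, and propose to build them by an ad hoc indicator-problem construction. Without pinning the $R_i$ down exactly, your worry that the chained relation might ``collapse'' is entirely justified: a single edge relation containing one extra tuple beyond the cross can wash out the parity.

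The paper resolves this cleanly and without any new indicator argument. It invokes a known result, stated here as \lemref{lem:edgerels} (\cite[Lemma~4.2]{rustem:minhom}): for every edge $((a,b),(c,d)) \in P$ one already has $\smallcrosspic a b c d \in \close{\Gamma}$ or $\smallmispic a b c d \in \close{\Gamma}$. The cost functions you correctly extracted from \propref{prop:use:dom} are then used, not at the very end, but right here to refine the second alternative to the first: with $\nu(b)<\nu(a)<\infty$ and $\tau(d)<\tau(c)<\infty$, the tuple $(a,c)$ in $\smallmispic a b c d$ is strictly dominated, so a suitably weighted optimum over $\smallmispic a b c d$ is exactly $\smallcrosspic a b c d$, whence $\smallcrosspic a b c d \in \wclose{\Gamma,\Delta} \subseteq \Gamma^+ = \Gamma$. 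This is \corref{cor:edgerels:simp}. Once every edge relation is a pure cross, the composition around an odd cycle is a one-line parity computation giving $\smallcrosspic{a_0}{b_0}{a_0}{b_0} \in \close{\Gamma}$.

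Finally, your endgame is slightly tangled: you first assume via \lemref{lem:get:hardness} that both commutative conservative operations on each $\{a_i,b_i\}$ extend to polymorphisms, then aim to show the chained relation is not preserved by them, and then separately appeal to the full two-element dichotomy. The middle step already finishes the proof: the disequality $\smallcrosspic{a_0}{b_0}{a_0}{b_0}$ is preserved by neither commutative conservative operation on $\{a_0,b_0\}$, so the first alternative of \lemref{lem:get:hardness} fails for $(a_0,b_0)$ and the lemma itself gives \class{NP}-hardness. No further appeal to the two-element classification is needed.
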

\begin{proof}
We will show that if $T$ is not bipartite, then $\smallcrosspic b a b a \in \close{\Gamma}$ for some $(a,b) \in M$.
From this it follows, using \lemref{lem:get:hardness}, that \minhom$(\Gamma,\Delta)$ is \class{NP}-hard.
We will make use of the following result.
\begin{lemma}[{\cite[Lemma~4.2]{rustem:minhom}}]
\label{lem:edgerels}
If $((a,b),(c,d)) \in P$, then either $\smallcrosspic a b c d \in \close{\Gamma}$ or $\smallmispic a b c d \in \close{\Gamma}$.
\end{lemma}
Since $\Gamma^+ \subseteq \Gamma$, and since there are functions $\nu,\tau \in \eclose{\Gamma,\Delta}$ \st $\nu(b)<\nu(a)<\infty$ and $\tau(d)<\tau(c)<\infty$, we immediately get the following.
\begin{corollary}
\label{cor:edgerels:simp}
If $((a,b),(c,d)) \in P$, then $\smallcrosspic a b c d \in \Gamma$.
\end{corollary}
Since $T$ is not bipartite it must contain an odd cycle $(a_0,b_0),$ $(a_1,b_1),$ $\dots,$ $(a_{2k},b_{2k}),$ $(a_0, b_0)$.
This means, according to \corref{cor:edgerels:simp}, that $\Gamma$ contains relations $\rho_{0,1},$ $\rho_{1,2},$ $\dots,$ $\rho_{2k-1,2k},$ $\rho_{2k,0}$ where $\rho_{i,j} = \smallcrosspic{a_i}{b_i}{a_j}{b_j}$.
Since the cycle is odd this means that
$\rho_{0,1} \circ \rho_{1,2} \circ \dots \circ \rho_{2k-1,2k} \circ \rho_{2k,0} = \smallcrosspic{a_0}{b_0}{a_0}{b_0} \in \close{\Gamma}$.
\hfill\qed
\end{proof}

\section {Concluding Remarks}
We have fully classified the complexity of \minsol\ on domains that contain at most three elements and the complexity of conservative \minhom\ on arbitrary finite domains.

Unlike for \csp\ there is no widely accepted conjecture for the complexity of \vcsp\@.
This makes the study of small-domain {\vcsp}s an exciting and important task.
We believe that a promising approach for this project is to study \minhom\ --- it is interesting
for its own sake and likely easier to analyse than the general \vcsp\@.

A natural continuation of the work presented in this paper would be to classify \minhom\ on domains of size three.
This probably is a result within reach using known techniques.
Another interesting question is what the complexity of three-element \minsol\ is when the domain valuation is not injective
(we note that if the valuation is constant the problem collapses to a \csp\ whose complexity has been classified by Bulatov~\cite{bulatov}, but situations where \eg $\nu(a)=\nu(b)<\nu(c)$ are not yet understood).

\bigskip
\noindent{\bf Acknowledgements.}
I am thankful to Peter Jonsson for rewarding discussions and to Magnus Wahlstr\"{o}m for helpful comments regarding the presentation of the results.
I am also grateful to three anonymous reviewers for their useful feedback.

\newpage
\appendix

\section {Proof of \thmref{thm:minsol}}
\label{s:thm:minsol}
We will in this section use $x \domps\omega y$ to denote that the binary fractional polymorphism $\omega$ is \domp x y.
For binary operations $f$ and $g$ the notation $\{ f \mapsto \frac 1 2, g \mapsto \frac 1 2 \}$ is used for the fractional polymorphism mapping $f$ and $g$ to $\frac 1 2$, and all other binary operations to $0$.
We will need the following lemma.
\begin{lemma}
\label{lem:extend:arithm}
Let $\Gamma$ be a set of finitary relations on $D = \{a,b,c\}$.
If there exists a binary operation $f \in \pol(\Gamma)$ and a ternary operation $m \in \pol(\Gamma)$ \st
$f$ and $m$ are idempotent,
$f|_{\{a,b\}}$ and $f|_{\{b,c\}}$ are projections,
$f(a,c)=f(c,a)=b$, and $m$ is arithmetical on $\binom D 2 \setminus \{ \{a,c\} \}$,
then there is an idempotent ternary operation $m' \in \pol(\Gamma)$ \st $m'$ is arithmetical on $\binom D 2$.
\end{lemma}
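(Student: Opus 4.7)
My plan is to construct $m'$ explicitly as a term operation in $f$ and $m$. The core idea is that $f$ collapses the problematic pair $\{a,c\}$ to the element $b$: once an input from $\{a,c\}$ has been replaced by $b$, the arithmeticity of $m$ on $\{a,b\}$ and $\{b,c\}$ can be leveraged.

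As a preliminary step I would normalise $f$. Since $f|_{\{a,b\}}$ and $f|_{\{b,c\}}$ are projections but possibly $\pr_2$ rather than $\pr_1$, I replace $f$ by $f' := f[f,\bar f]$, where $\bar f(x,y):=f(y,x)$. A direct check (as in the proof of \thmref{thm:wt:p}) shows that $f'$ is still idempotent, still belongs to $\pol(\Gamma)$, still satisfies $f'(a,c)=f'(c,a)=b$, and is now $\pr_1$ on both $\{a,b\}$ and $\{b,c\}$. Henceforth I may therefore assume $f|_{\{a,b\}}=f|_{\{b,c\}}=\pr_1$.

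The candidate operation is then
\ba
g_1(x,y,z) &:= m(x,\, f(y,x),\, f(z,x)),\\
g_2(x,y,z) &:= m(f(x,z),\, f(y,z),\, z),\\
m'(x,y,z) &:= m\bigl(g_1(x,y,z),\ f(x,z),\ g_2(x,y,z)\bigr).
\ea
That $m'\in\pol(\Gamma)$ and $m'$ is idempotent is immediate. For arithmeticity on $\{a,b\}$ (the pair $\{b,c\}$ is symmetric), note that $m$ preserves $\{a,b\}$ and every $f$ in the term reduces to $\pr_1$ on $\{a,b\}^3$; hence $g_1 = g_2 = m(x,y,z)$ and $f(x,z)=x$, so $m'(x,y,z) = m(u,x,u)$ with $u:=m(x,y,z)\in\{a,b\}$, which equals $u$ by the identity $m(u,x,u)=u$. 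Thus $m'\equiv m$ on $\{a,b\}^3$, and arithmeticity is inherited.

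For the remaining pair $\{a,c\}$ I would verify the six identities $m'(x,y,y)=m'(x,y,x)=m'(y,y,x)=x$ (for $(x,y)\in\{(a,c),(c,a)\}$) by direct calculation. The design is that $g_1$ returns the desired value $x$ on the four tuples of the form $(x,y,y)$ or $(x,y,x)$, while $g_2$ does so on the four of the form $(y,y,x)$ or $(x,y,x)$; whenever a helper fails it evaluates instead to $b$. At the same time $f(x,z)$ equals $b$ on the four tuples with $\{x,z\}=\{a,c\}$, and equals $x$ on the two tuples $(x,y,x)$. Feeding these values into the outer $m$ and invoking the arithmeticity of $m$ on $\{a,b\}$ and $\{b,c\}$ (identities $m(p,b,b)=p$ and $m(b,b,p)=p$ for $p\in\{a,c\}$), together with idempotence in the $(x,y,x)$-case, yields $m'=x$ in every case.

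The main obstacle is designing $g_1$ and $g_2$ so that in each of the six $\{a,c\}$-tuples at least one helper returns the correct answer while the middle slot $f(x,z)$ sits in a pair on which $m$ is already arithmetic. The preliminary $f[f,\bar f]$-normalisation is what makes the analysis uniform; without it one would have to consider four subcases depending on which projection $f$ is on each of $\{a,b\}$ and $\{b,c\}$.
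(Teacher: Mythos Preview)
Your construction is correct. The normalisation step is identical to the paper's, and the six verifications on $\{a,c\}$ go through exactly as you sketch; your claim that $m'|_{\{a,b\}^3}=m|_{\{a,b\}^3}$ (and similarly for $\{b,c\}$) relies on the observation that idempotence together with arithmeticity on a two-element set determines $m$ on all eight triples, so in particular $m$ preserves that set --- this is true but deserves one line of justification.

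The paper takes a different route after normalising $f$. It first builds an intermediate operation
\[
g(x,y,z)=m\bigl(f(m(x,y,z),z),\,f(m(y,x,z),z),\,z\bigr),
\]
which is arithmetical on $\{a,b\}$ and $\{b,c\}$ and additionally satisfies $g(a,a,c)=c$, $g(c,c,a)=a$; then it defines $h(x,y,z)=g(z,f(y,z),g(x,f(x,y),f(x,z)))$ and finally $m'(x,y,z)=g(f(x,y),f(y,x),h(x,y,z))$. So the paper stacks three layers ($g$, $h$, $m'$) where you use two symmetric helpers $g_1,g_2$ fed into a single outer $m$. Your design principle --- let $g_1$ handle the tuples $(x,y,y),(x,y,x)$ and $g_2$ handle $(y,y,x),(x,y,x)$, with the middle argument $f(x,z)$ always landing in a pair on which $m$ is already arithmetic --- is more transparent and yields a shorter term than the paper's construction. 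Both arguments are ultimately bookkeeping, but yours is the cleaner of the two.
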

\begin{proof}
Assume \wlg that $f|_{\{a,b\}} = f|_{\{b,c\}} = \pr_1$.
If this does not hold, then $f' = f[f,\overline{f}]$ is another polymorphism that does satisfy the condition.

Let $g(x,y,z) = m( f(m(x,y,z),z), f(m(y,x,z),z), z )$.
It can be checked that $g$ is arithmetical on $\binom D 2 \setminus \{ \{a,c\} \}$ and additionally satisfies $g(a,a,c)=c$ and $g(c,c,a)=a$.
Define $h(x,y,z) = g(z, f(y,z), g(x,f(x,y),f(x,z)))$ and $m'(x,y,z) = g( f(x,y), f(y,x), h(x,y,z) )$.
It is straightforward to verify that $m'$ is indeed arithmetical on $\binom D 2$.
\hfill\qed
\end{proof}

Let $(\Gamma,\nu)$ be a min-core language on the domain $D = \{a,b,c\}$
and $\nu(a) < \nu(b) < \nu(c) < \infty$.
Let $\Gamma^+ = \Gamma \cup (2^D \cup 2^{D^2}) \cap \wclose{\Gamma,\nu}$.
Note that if $(\Gamma^+, \nu)$ 
is of type \typegwtp\ (\typegmc, \typebsm), 
then there is $S \subseteq 2^D$ \st also $(\Gamma \cup S,\nu)$ is of type \typegwtp\ (\typegmc, \typebsm).
Since \minhom$(\Gamma^+, \nu)$ is polynomial-time reducible to \minhom$(\Gamma,\nu)$ we therefore assume that $\Gamma^+ \subseteq \Gamma$.
By \lemref{lem:hasconst} we can assume that $\Gamma^c \subseteq \Gamma$ since otherwise $\Gamma$ is of type \typegmc\@.

In the following we will assume that \minhom$(\Gamma,\nu)$ is not \class{NP}-hard and show that this implies that $(\Gamma,\nu)$ is of type  \typegwtp, \typegmc\ or \typebsm\@.

Let $B \subseteq \binom D 2$ be a minimal set \st for every $\{a,b\} \in \binom D 2 \setminus B$ there is a ternary operation in $\pol(\Gamma)$ that is arithmetical on $\{a,b\}$ and all binary operations in $\pol(\Gamma)$ are projections on $\{a,b\}$.
Then, let $A$ be a maximal subset of $B$ \st for every $\{a,b\} \in A$ there is $\omega \in \fpol(\Gamma,\nu)$ \st $\omega$ is either \domp a b or \domp b a.
We can assume that there are $f_1,f_2,m \in \pol(\Gamma)$ \st for every $\{x,y\} \in \Gamma \cap B \setminus A$ it holds that $f_1|_{\{x,y\}}$ and $f_2|_{\{x,y\}}$ are different idempotent, commutative and conservative operations, and on $\binom D 2 \setminus B$, $m$ is arithmetical while $f,g$ are projections. Otherwise, by Lemmas~\ref{lem:get:hardness}, \ref{lem:bipartite}, \ref{lem:arithmetical} and~\ref{lem:notbipartite}, \minhom$(\Gamma,\nu)$ is \class{NP}-hard.
This means, unless \minhom$(\Gamma,\nu)$ is \class{NP}-hard, that if there is any $\{x,y\} \subseteq D$ \st $(\Gamma,\nu)$ admits
a fractional polymorphism that is \domp x y and $D \setminus \{y\} \in \Gamma$,
or if $\Gamma$ is conservative, then $(\Gamma,\nu)$ is of type \typegwtp\@.
In the following we therefore assume that this is not the case.

We split the rest of the proof in seven cases depending on which unary relations $\Gamma$ contains.

\subsection{$\{a,b\} \not\in \Gamma$, $\{a,c\} \in \Gamma$, $\{b,c\} \in \Gamma$}
By \lemref{lem:fpol}, there is $\omega \in \fpol(\Gamma,\nu)$ \st some $f \in \supp(\omega)$ satisfies $f \in \doa b a$.
This means, unless $a \domps\omega b$, that every $g \in \supp(\omega)$ is conservative and that there is \smash[b]{$f' \in \supp(\omega) \cap \upa b a$}.
\be
\item
If $\smallcrosspic c a c a \not\in \Gamma$ and $\smallcrosspic c b c b \not\in \Gamma$, then by \lemref{lem:fpol} we may assume that there is $g,h \in \supp(\omega)$ \st $g \in \doa c a$ and $h \in \doa c b$.
If there is any $i \in \supp(\omega)$ \st $i \in \doa b a$ and \smash[t]{$i \not\in \upa c a$, then $i' = g[i,\overline{i}] \in \doa b a \doa c a$}.
In this case
$\psi = \{ i' \mapsto \frac 1 4, \overline{i'} \mapsto \frac 1 4, h \mapsto \frac 1 4, \overline{h} \mapsto \frac 1 4 \} \in \fpol(\Gamma,\nu)$,
so unless $a \domps\psi b$ we have $h \in \upa b a$.
Note that $h,i' \in \pol(\Gamma \cup \{ \{a,b\} \})$ 
and unless $a \domps\psi c$ or $b \domps\psi c$ it must hold that $i',h$ are complementary and $(\Gamma \cup \{ \{a,b\} \},\nu)$ is of type \typegwtp\@.

Otherwise it follows by symmetry that $\supp(\omega) \cap \doa b a \subseteq \upa c a \upa c b$, $\supp(\omega) \cap \doa c a \subseteq \upa b a \upa c b$
and \smash[t]{$\supp(\omega) \cap \doa c b \subseteq \upa b a \upa c a$}, this contradicts that $\omega \in \fpol(\Gamma,\nu)$.

\item
If $\smallcrosspic c a c a \in \Gamma$ and $\smallcrosspic c b c b \not\in \Gamma$, then by \lemref{lem:fpol} we can assume \smash[t]{$g \in \doa c b$} for some $g \in \supp(\omega)$.
Here \smash{$\psi = \{ g \mapsto \frac 1 4, \overline{g} \mapsto \frac 1 4, f \mapsto \frac 1 4, \overline{f} \mapsto \frac 1 4 \} \in \fpol(\Gamma,\nu)$}.
Unless $a \domps\psi b$ we have $g \in \upa b a$.
Note that $f,g \in \pol(\Gamma \cup \{ \{a,b\} \})$ and unless $b \domps\psi c$
it must hold that $g, f$ are complementary, so $(\Gamma \cup \{ \{a,b\},\nu)$ is of type \typegwtp\@.

\item
If $\smallcrosspic c a c a \not\in \Gamma$ and $\smallcrosspic c b c b \in \Gamma$, then by symmetry to the case above, $(\Gamma,\nu)$ is of type \typegwtp\@.

\item
If $\smallcrosspic c a c a \in \Gamma$ and $\smallcrosspic c b c b \in \Gamma$, then by $f,f'$, $(\Gamma,\nu)$ is of type \typegwtp\@.
\ee

\subsection{$\{a,b\} \in \Gamma$, $\{a,c\} \in \Gamma$, $\{b,c\} \not\in \Gamma$}
By \lemref{lem:fpol} there is $\omega \in \fpol(\Gamma,\nu)$ \st $(f,\overline f)(b,c) \not\in \smallmispic c b c b$ for some $f \in \supp(\omega)$.
\be
\item
If $\smallcrosspic b a b a \not\in \Gamma$ and $\smallcrosspic c a c a \not\in \Gamma$,
then, by \lemref{lem:fpol}, we may assume \smash{$g \in \doa b a$, $h \in \doa c a$} for some $g,h \in \supp(\omega)$.

\be
\item
If there is $i \in \pol(\Gamma)$ \st \smash[t]{$i \in \doa b a \doa c a$},
then \smash[t]{$\psi = \{ i \mapsto \frac 1 2, \overline{i} \mapsto \frac 1 2 \} \in \fpol(\Gamma,\nu)$} and $a \domps\psi b$, or \smash[t]{$i \in \upa c b$}.
Note that $f' = i[f,\overline{f}]$ is commutative and satisfies $f'(b,c) \in \{a,b\}$.
So $\psi = \{ f' \mapsto \frac 1 2, i \mapsto \frac 1 2 \} \in \fpol(\Gamma,\nu)$
and $f', i$ must be complementary  unless $a \domps\psi b$, $a \domps\psi c$ or $c \domps\psi b$.
This means that $(\Gamma,\nu)$ is of type \typegwtp\@.

\item
Otherwise $\pol(\Gamma) \cap \doa b a \subseteq \upa c a$ and $\pol(\Gamma) \cap \doa c a \subseteq \upa b a$.
\bi
\item
If there is $r \in \pol(\Gamma)$ \st $(r,\overline r)(b,c) \not\in \smallmispic c b c b$ and $r$ is a projection on both $\{a,b\}$ and $\{a,c\}$, then
\bi
\item
if $(r,\overline r)(b,c) \in \smallcrosspic c a c a$ we have $\psi = \{ r \mapsto \frac 1 2, \overline{r} \mapsto \frac 1 2 \} \in \fpol(\Gamma,\nu)$ and $c \domps\psi b$, 
\item
if $(r,\overline r)(b,c) = \smallmispic b a b a$, then
$\psi = \{ g[r,\overline{r}] \mapsto \frac 1 2, h[r,\overline{r}] \mapsto \frac 1 2 \} \in \fpol(\Gamma,\nu)$ and $b \domps\psi c$, 
\item
otherwise $(r,\overline r)(b,c) = (a,a)$.
Let $g' = g[r,\overline{r}]$, $h' = h[r,\overline{r}]$ and $f' = g'[\pr_1,h']$.
Is is easy to check that
$\psi = \{ f' \mapsto \frac 1 2, \overline{f'} \mapsto \frac 1 2 \} \in \fpol(\Gamma,\nu)$ and $c \domps\psi b$.
\ei

\item
If there is $r \in \pol(\Gamma)$ \st $(r,\overline r)(b,c) \not\in \smallmispic c b c b$ and \smash[b]{$r \in \doa b a \upa c a$}:
\bi
\item
If $(r,\overline r)(b,c) \in \smallflipmispic c a c a$, consider the following.
\bi
\item
If $(h,\overline h)(b,c) \in \smallmispic c a c a$, let $r' = h[r,\overline{r}]$ and $h' = g[h,\overline{h}]$.
Now $\psi = \{ r' \mapsto \frac 1 2, h' \mapsto \frac 1 2 \} \in \fpol(\Gamma,\nu)$ and $c \domps\psi b$.
\item
If $(h,\overline h)(b,c) \in \smallmispic b a b a$, let $r' = h[r,\overline{r}]$ and $h' = h[h,\overline{h}]$.
Now $\psi = \{ r' \mapsto \frac 1 2, h' \mapsto \frac 1 2 \} \in \fpol(\Gamma,\nu)$ and $b \domps\psi c$.
\item
If $(h,\overline h)(b,c) = (a,a)$, let $h' = h[\pr_1,h]$.
Here \smash{$h' \in \upa b a \doa c a$} and $(h',\overline{h'})(b,c) \in \smallmispic b a b a$, so the previous case applies.
\item
If $(h,\overline h)(b,c) \in \smallcrosspic c b c b$, let $h' = r[h,\overline{h}]$.
Here \smash[b]{$h' \in \upa b a \doa c a$} and also $(h',\overline{h'})(b,c) \in \smallflipmispic c a c a$, so other cases can be used.
\ei
\item
If $(r,\overline r)(b,c) \in \smallmispic b a b a$, let $r' = h[r,\overline{r}]$.
Here either $r', h'=r'[h,\overline{h}]$ are complementary and $(\Gamma,\nu)$ is of type \typegwtp,
or $\psi = \{ r' \mapsto \frac 1 2, h' \mapsto \frac 1 2 \} \in \fpol(\Gamma,\nu)$ and $b \domps\psi c$.
\ei

\item
If there is $r \in \pol(\Gamma)$ \st $(r,\overline r)(b,c) \not\in \smallmispic c b c b$ and \smash[t]{$r \in \upa b a \doa c a$},
then by arguments symmetric to the ones above, $(\Gamma,\nu)$ is of type \typegwtp\@.

\item
Otherwise, every $r \in \pol(\Gamma)$ \st $(r,\overline r)(b,c) \not\in \smallmispic c b c b$ satisfies \smash{$r \in \upa b a \upa c a$}.
This contradicts that $\omega \in \fpol(\Gamma,\nu)$.
\ei
\ee

\item
If $\smallcrosspic b a b a \not\in \Gamma$ and $\smallcrosspic c a c a \in \Gamma$,
then, by \lemref{lem:fpol}, we may assume \smash{$h \in \doa b a$} for some $h \in \supp(\omega)$.

If there is $f' \in \supp(\omega)$ \st $(f',\overline{f'})(a,b) \in \smallcrosspic b a b a$ and $(f',\overline{f'})(b,c) \not\in \smallmispic c b c b$, then $\psi = \{h[f',\overline{f'}] \mapsto \frac 1 2, \overline{h[f',\overline{f'}]} \mapsto \frac 1 2\} \in \fpol(\Gamma,\nu)$ and $a \domps\psi b$.
And, if there is $i \in \supp(\omega)$ \st $i \in \doa b a$ and $i \not\in \upa c b$, then $\psi = \{i \mapsto \frac 1 2, \overline{i} \mapsto \frac 1 2\} \in \fpol(\Gamma,\nu)$ and $a \domps\psi b$.
Since we can assume that no operation in $\supp(\omega)$ is a projection on both $\{a,b\}$ and $\{b,c\}$,
this means that if $j \in \supp(\omega)$ and $j \not\in \upa b a$, then $j \in \upa c b$.
So, unless $\omega$ is \st $a \domps\omega b$ or $c \domps\omega b$, it must hold that $f,h$ are complementary, and $(\Gamma,\nu)$ is of type \typegwtp\@.

\item
If $\smallcrosspic b a b a \in \Gamma$ and $\smallcrosspic c a c a \not\in \Gamma$,
then, arguments symmetric to those in the case above establishes that $(\Gamma,\nu)$ is of type \typegwtp\@.

\item
If $\smallcrosspic b a b a \in \Gamma$ and $\smallcrosspic c a c a \in \Gamma$, the following holds.
If $(f,\overline f)(b,c) \ne (a,a)$, then $\psi = \{f \mapsto \frac 1 2, \overline f \mapsto \frac 1 2\} \in \fpol(\Gamma,\nu)$ and either $b \domps\psi c$ or $c \domps\psi b$.
Otherwise $(f,\overline f)(b,c) = (a,a)$ and, by \lemref{lem:extend:arithm}, $(\Gamma,\nu)$ is of type \typegwtp\@.
\ee

\subsection{$\{a,b\} \in \Gamma$, $\{a,c\} \not\in \Gamma$, $\{b,c\} \in \Gamma$}

By \lemref{lem:fpol} there is $\omega \in \fpol(\Gamma,\nu)$ and $f \in \supp(\omega)$ \st \smash{$(f,\overline f)(a,c) \in \smallallpic b a b a$}
and $\nu(f(a,c)) + \nu(f(c,a)) \le \nu(a)+\nu(c)$.
\be
\item
If \smash{$\smallcrosspic b a b a \not\in \Gamma$ and $\smallcrosspic c b c b \not\in \Gamma$},
then, by \lemref{lem:fpol}, we may assume that there is $g,h \in \supp(\omega)$ \st $g \in \doa b a$ and $h \in \doa c b$.
Since $\{a,c\} \not\in \close{\Gamma}$ there is $f \in \pol(\Gamma)$ \st $f(c,a)=b$.
We can assume $f(a,c) \in \{a,b\}$ since otherwise $f'=h[f,\overline f]$ satisfies the property.

If there is $i \in \pol(\Gamma,\nu) \cap \doa b a \doa c b$, then $h' = i[ i[\pr_1,f], \overline{i[\pr_1,f]} ] \in \doa b a \doa c a \doa c b$.
Here $\psi = \{ h' \mapsto \frac 1 2, \overline{ h' } \mapsto \frac 1 2 \} \in \fpol(\Gamma,\nu)$ and $a \domps\psi c$.

Otherwise $\pol(\Gamma) \cap \doa c b \subseteq \upa b a$ and $\pol(\Gamma) \cap  \doa b a \subseteq \upa c b$.
In the following we assume $f(a,c)=f(c,a)=b$ since if this does not hold, then $g[f,\overline{f}]$ satisfies the property.
\bi
\item
If thers is $f \in \supp(\omega)$ that is a projection on $\{a,b\}$, $\{b,c\}$ and satisfies $(f,\overline f)(a,c) \in \smallallpic b a b a$, then
\bi
\item
if $(f,\overline f)(a,c) \in \smallflipmispic b a b a$,
then $\psi = \{ f \mapsto \frac 1 2, \overline{f} \mapsto \frac 1 2 \} \in \fpol(\Gamma,\nu)$ and $a \domps\psi c$,
\item
otherwise $(f,\overline f)(a,c) = (b,b)$.
Note that we might assume $2 \nu(b) \le \nu(a)+\nu(c)$, since otherwise we can modify $\omega$ by setting $\omega(f)=0$ and rescaling the function so that $\sum_{f \in \binopers} \omega(f)=1$. This gives another fractional polymorphism that satisfies our conditions.
Since $h'= h[f,\overline{f}] \in \doa a b \doa c b$, $g' = g[f,\overline{f}] \in \upa a b \upa c b$ and $h'(a,c)=h'(c,a)=g'(a,c)=g'(c,a)=b$,
this means that $(\Gamma,\nu)$ is of type \typebsm\@.
\ei

\item
Otherwise we can assume that $\supp(\omega) = \Omega_1 \cup \Omega_2$, where $\Omega_1 = \supp(\omega) \cap \doa b a \upa c b$
and $\Omega_2 = \supp(\omega) \cap \upa b a \doa c b$.
\bi
\item
If there is $f \in \supp(\omega)$ \st \smash{$(f,\overline f)(a,c) \in \smallflipmispic b a b a$}, then \smash{$f'=g[f,\overline{f}] \in \doa c a$}
and $f \in \doa b a \upa c b$ or $f \in \upa b a \doa c b$.
Assume \wlg the latter holds, then $f',g'=f'[g,\overline{g}]$ are complementary and $(\Gamma,\nu)$ is of type \typegwtp,
or $\psi = \{f' \mapsto \frac 1 2, g' \mapsto \frac 1 2\} \in \fpol(\Gamma,\nu)$ and $a \domps\psi c$.

\item
Otherwise every $f \in \supp(\omega)$ \st $(f,\overline f)(a,c) \in \smallallpic b a b a$ is commutative and satisfies $f(a,c)=b$.
If there is any $f \in \supp(\omega)$ \st $(f,\overline f)(a,c) \in \smallcrosspic c b c b$, then $\omega$ can be modified by changing $f$ to $h[f,\overline{f}]$.
We therefore assume that every $f \in \supp(\omega)$ that is not a projection on $\{a,c\}$ is commutative on $\{a,c\}$ and satisfies $f(a,c) \in \{b,c\}$.
\bi
\item
If $\nu(a)+\nu(c) < 2\,\nu(b)$, then since not all operations in $\supp(\omega)$ are projection on $\{a,c\}$ it is impossible that $\omega \in \fpol(\Gamma,\nu)$.
\item
If $\nu(a)+\nu(c) \ge 2\,\nu(b)$, then since $\omega \in \fpol(\Gamma,\nu)$ and $\nu(a) < \nu(b)$ there must be $f_1 \in \Omega_1, f_2 \in \Omega_2$ \st $f_1(a,c)=f_2(a,c)=b$.
So $(\Gamma,\nu)$ is in this case of type \typebsm\@.
\ei
\ei
\ei

\item
If $\smallcrosspic b a b a \not\in \Gamma$ and $\smallcrosspic c b c b \in \Gamma$,
then, by \lemref{lem:fpol}, we may assume \smash{$g \in \doa b a$} for some $g \in \supp(\omega)$.
Assume \wlg that $g|_{\{b,c\}} = \pr_1$.
\bi
\item
If \smash[t]{$(f,\overline f)(a,c) \in \smallflipmispic b a b a$}, set $f'=g[f,\overline f]$.
Note that $\psi = \{f' \mapsto \frac 1 4, \overline{f'} \mapsto \frac 1 4, g \mapsto \frac 1 4, \overline{g} \mapsto \frac 1 4\} \in \fpol(\Gamma,\nu)$.
Unless $a \domps\psi c$ it holds that $g \in \upa c a$.
This means that $f',g \in \pol(\Gamma \cup \{ \{a,c\} \})$, so either $a \domps\psi b$
or $f',g$ are complementary and $(\Gamma \cup \{ \{a,c\} \},\nu)$ is of type \typegwtp\@.
\item
Otherwise $f(a,c)=f(c,a)=b$, and
\smash{$g'=g[ g[f,\pr_1], \overline{g[f,\pr_1]} ] \in \doa c a \doa b a$},
so $\psi = \{ g' \mapsto \frac 1 2, \overline{g'} \mapsto \frac 1 2\} \in \fpol(\Gamma,\nu)$ and $a \domps\psi c$.
\ei

\item
If $\smallcrosspic b a b a \in \Gamma$ and $\smallcrosspic c b c b \not\in \Gamma$,
then, by \lemref{lem:fpol}, we may assume \smash{$h \in \doa c b$} for some $h \in \supp(\omega)$.
Assume \wlg that $h|_{\{a,b\}}=\pr_1$.

Since $\{a,c\} \not\in \close{\Gamma}$ there is a binary operation $f \in \pol(\Gamma)$ \st $f(a,c)=b$.
Let $f' = h[f,\overline{f}]$ and note that $(f',\overline{f'})(a,c) \in \smallmispic b a b a$.
\bi
\item 
If $(h,\overline h)(a,c) \in \smallflipmispic c a c a \cup \smallcrosspic b a b a$,
then $\psi = \{h \mapsto \frac 1 2, \overline{h} \mapsto \frac 1 2\} \in \fpol(\Gamma,\nu)$ and $b \domps\psi c$.
\item
If $(h,\overline h)(a,c) = (b,b)$, then let $h' = h[\pr_1,h]$ and note that
$\psi = \{h' \mapsto \frac 1 2, \overline{h'} \mapsto \frac 1 2\} \in \fpol(\Gamma,\nu)$ and $b \domps\psi c$.
\item 
If $(h,\overline h)(a,c) \in \smallcrosspic c b c b$, then $h'=h[h,\overline{h}] \in \doa c b$, $h'|_{\{a,b\}} = \pr_1$
and $(h,\overline h)(a,c) = (b,b)$, so the previous case is applicable.
\item
If $(h,\overline h)(a,c) = (c,c)$, then
$h'=h[f',h] \in \doa c b$, $(h',\overline{h'})(a,c) \in \smallflipmispic c b c b$ and $h'$ is a projection.
So one of the previous cases apply.
\ei

\item
If $\smallcrosspic b a b a \in \Gamma$ and $\smallcrosspic c b c b \in \Gamma$, then,
\bi
\item
if $(f,\overline f)(a,c) \in \smallflipmispic b a b a$,
then $\psi = \{f \mapsto \frac 1 2, \overline f \mapsto \frac 1 2\} \in \fpol(\Gamma,\nu)$ and $a \domps\psi c$,
\item
otherwise $(f,\overline{f})(a,c)=(b,b)$ and, by \lemref{lem:extend:arithm}, $(\Gamma,\nu)$ is of type \typegwtp\@.
\ei
\ee

\subsection{$\{a,b\} \in \Gamma$, $\{a,c\} \not\in \Gamma$, $\{b,c\} \not\in \Gamma$}
By \lemref{lem:fpol} there is $\omega \in \fpol(\Gamma,\nu)$ and $f \in \supp(\omega)$ \st $(f,\overline f)(a,c) \in \smallallpic b a b a$.
By \lemref{lem:hascross}, $\smallcrosspic b a b a \in \Gamma$, or $(\Gamma,\nu)$ is of type \typegmc\@.
Since $\{b,c\} \not\in \close{\Gamma}$ we have $g \in \pol(\Gamma)$ \st
\smash[b]{$(g,\overline g)(b,c) \in \mypic{ \makeline 0 0 \makeline 0 1 \makeline 1 0 \makeline 0 2 \makeline 2 0 }$}.
Let $g' = f[g,\overline g]$, now $(g',\overline{g'})(b,c) \in \smallallpic b a b a$.
\bi
\item
If $(f,\overline{f})(c,b) \in \smallcrosspic c b c b$, then
\bi
\item
if $(f,\overline{f})(a,c) \in \smallflipmispic b a b a$, then $\psi = \{f \mapsto \frac 1 2, \overline{f} \mapsto \frac 1 2\} \in \fpol(\Gamma,\nu)$ and $a \domps\psi c$,
\item
otherwise $(f,\overline{f})(a,c) = (b,b)$.
Let $f' = f[f,\overline{f}]$ and note that $f'|_{\{a,b\}} = \pr_1$, $f'|_{\{b,c\}} = \pr_1$ and $f'(a,c)=f'(c,a)=b$.
With $g'' = g'[g',\overline{g'}]$ and $f'' = g''[\pr_1,f']$ it holds that $\psi = \{f'' \mapsto \frac 1 2, \overline{f''} \mapsto \frac 1 2\} \in \fpol(\Gamma,\nu)$ and $a \domps\psi c$.
\ei
\item
If $(f,\overline{f})(c,b) \in \smallallpic b a b a$, then
\bi
\item
if $(f,\overline{f})(a,c) \in \smallflipmispic b a b a$, then $\psi = \{f \mapsto \frac 1 2, \overline{f} \mapsto \frac 1 2\} \in \fpol(\Gamma,\nu)$ and $a \domps\psi c$,
\item
otherwise $(f,\overline{f})(a,c) = (b,b)$.
Let $f' = f[f,\overline{f}]$ and note that $f'|_{\{a,b\}} = \pr_1$, $f'(b,c),f'(c,b) \in \{a,b\}$ and $f'(a,c)=f'(c,a)=b$.
With $f'' = f'[\pr_1,f']$ it holds that $\psi = \{f'' \mapsto \frac 1 2, \overline{f''} \mapsto \frac 1 2\} \in \fpol(\Gamma,\nu)$ and $a \domps\psi c$.
\ei
\item
If $(f,\overline{f})(c,b) \in \smallcrosspic c a c a$, then $f' = f[f,\overline{f}]$ satisfies $f'|_{\{a,b\}} = \pr_1$,
$(f,\overline{f})(a,c) \in \smallallpic b a b a$ and $(f,\overline{f})(b,c) \in \smallallpic b a b a$, so the previous case applies.
\item
Otherwise $f \in \upa c b$.
\bi
\item
If $(f,\overline{f})(a,c) \in \smallmispic b a b a$, then assume \wlg $f(c,a) = b$.
Let $f'=g'[\pr_1,f]$.
Now $f'|_{\{a,b\}}$ is a projection, $(f',\overline{f'})(a,c) \in \smallallpic b a b a$ and $f' \not\in \upa c b$, so we may use one of the previous cases.
\item 
Otherwise $f \in \doa c a$.
Since $\{a,c\} \not\in \Gamma$, there is $i \in \pol(\Gamma)$ \st $i(c,a)=b$.
Note that $f' = i(\pr_1,f)$ satisfies $(f',\overline{f'})(a,c)=(a,b)$.
This takes us to the previous case.
\ei
\ei

\subsection{$\{a,b\} \not\in \Gamma$, $\{a,c\} \in \Gamma$, $\{b,c\} \not\in \Gamma$}
By \lemref{lem:fpol} there is $\omega \in \fpol(\Gamma,\nu)$ and $g \in \supp(\omega)$ \st $g \in \doa b a$.
By \lemref{lem:hascross}, $\smallcrosspic c a c a \in \Gamma$, or $(\Gamma,\nu)$ is of type \typegmc\@.
Unless $a \domps\omega b$ every $i \in \supp(\omega)$ is conservative on $\{a,b\}$.
Since $\{b,c\} \not\in \Gamma$ we also have $h \in \pol(\Gamma)$ \st $h(b,c)=a$.

Let $R$ be the relation generated by $\pol(\Gamma)$ from $\smallcrosspic c b c b$.
Since $h(b,c)=a$, $R$ can not equal $\smallmispic c b c b$ or $\smallcrosspic c b c b$.
Neither does $R$ equal $\mypic{\makeline 2 1 \makeline 1 2 \makeline 0 2 \makeline 2 0}$
or $\mypic{\makeline 2 2 \makeline 2 1 \makeline 1 2 \makeline 0 2 \makeline 2 0}$.
To see this note that;
in the fist case, since $\{c\} \in \Gamma$, it follows that $\{a,b\} \in \close{\Gamma}$, a contradiction,
and, in the second case there can not be a ternary operation $m \in \pol(\Gamma)$ that is arithmetic on $\{a,c\}$, this contradicts that $\{a,c\}$ is a cross-pair.
Hence, there must be $f \in \pol(\Gamma)$ \st $(f,\overline f)(b,c) \in \smallallpic b a b a$.

If $g \not\in \upa c b$,
then $\psi = \{ g \mapsto \frac 1 2, \overline{g} \mapsto \frac 1 2 \} \in \fpol(\Gamma,\nu)$ and $a \domps\psi b$.
We therefore assume $\pol(\Gamma) \cap \doa b a \subseteq \upa c b$.
This means that $\psi = \{ g \mapsto \frac 1 4, \overline{g} \mapsto \frac 1 4, f \mapsto \frac 1 4, \overline{f} \mapsto \frac 1 4 \} \in \fpol(\Gamma,\nu)$. So either $f,g$ are complementary and $(\Gamma,\nu)$ is of type \typegwtp, or $a \domps\psi b$ or $c \domps\psi b$.

\subsection{$\{a,b\} \not\in \Gamma$, $\{a,c\} \not\in \Gamma$, $\{b,c\} \in \Gamma$}
By \lemref{lem:fpol} there is $\omega \in \fpol(\Gamma,\nu)$ and $f',g \in \supp(\omega)$ \st $(f',\overline{f'})(a,c) \in \smallallpic b a b a$ and $g \in \doa b a$. 
By \lemref{lem:hascross}, $\smallcrosspic c b c b \in \Gamma$, or $(\Gamma,\nu)$ is of type \typegmc\@.
Assume \wlg that $g|_{\{b,c\}}=\pr_2$.
Let $f$ be any operation in $\pol(\Gamma) \cap \doa c a$.
Such an operation must exist.
Note that $h = g[f',\overline{f'}]$ satisfies $h(a,c)=h(c,a) \in \{a,b\}$. If $h \not\in \doa c a$, then $g[ g[ \pr_1, h ], \overline{g[ \pr_1, h ]} ] \in \doa c a$.
\bi
\item
If $(f,\overline{f})(a,b) \in \smallallpic b a b a$, then 
$\psi = \{ f \mapsto \frac 1 4, \overline{f} \mapsto \frac 1 4, g \mapsto \frac 1 4, \overline{g} \mapsto \frac 1 4\} \in \fpol(\Gamma \cup \{ \{a,b\} \},\nu)$.
Unless $a \domps\psi c$ we have $g \in \upa c a$, this means that $f,g \in \pol(\Gamma \cup \{ \{a,b\}, \{a,c\} \})$, so either
$f,g$ are complementary and $(\Gamma \cup \{ \{a,b\}, \{a,c\} \},\nu)$ is of type \typegwtp, or $a \domps\psi b$.

\item
If $(f,\overline{f})(a,b) \in \smallcrosspic c a c a$, then $f' = f[f,\overline{f}] \in \doa c a \doa b a$, so the previous case applies.

\item
If $(f,\overline{f})(a,b) \in \smallcrosspic c b c b$, then assume \wlg that $f(a,b)=c$.
It is easily checked that $g[ f[\pr_1,f], \overline{f[\pr_1,f]} ] \in \doa b a \doa c a$, so the first case applies.

\item
If $(f,\overline{f})(a,b) = (c,c)$, then assume \wlg $f|_{\{b,c\}} = \pr_1$. 
Once more we have $g[ f[ \pr_1 ,f ], \overline{f[ \pr_1 ,f ]} ] \in \doa b a \doa c a$, so again the first case applies.
\ei

\subsection{$\{a,b\} \not\in \Gamma$, $\{a,c\} \not\in \Gamma$, $\{b,c\} \not\in \Gamma$}
By \lemref{lem:hascross}, $R = \mypic{ \makeline 2 0 \makeline 1 1 \makeline 0 2 } \in \Gamma$, or $(\Gamma,\nu)$ is of type \typegmc\@.
In what follows we assume that $(\Gamma,\nu)$ is not of type \typegmc\@.
A consequence of this is that, for every binary function $f \in \pol(\Gamma)$, it holds that $f(a,c)=f(c,a)=b$ or $f|_{\{a,c\}}=\pr_1$ or $f|_{\{a,c\}}=\pr_2$.

By \lemref{lem:fpol} there is $\omega \in \fpol(\Gamma,\nu)$ and $g, h, i \in \supp(\omega)$ \st $(g,\overline{g})(a,c) \in \smallallpic b a b a$, $h \in \doa b a$ and $(i,\overline{i})(b,c) \in \smallallpic b a b a \cup \smallcrosspic c a c a$. 
Since $R \in \Gamma$ this means that $g(a,c)=g(c,a)=b$, $2 \nu(b) \le \nu(a)+\nu(c)$ and $h \in \upa c b$.
Hence, $h' = g[h,\overline{h}] \in \upa a b \upa c b$ and $h'(a,c)=h'(c,a)=b$.
\bi
\item
If $(i,\overline{i})(b,c) \in \smallcrosspic c a c a$, then $i' = h'[i,\overline{i}] \in \doa c b$ and $i'(a,c)=i'(c,a)=b$.
\item
If $(i,\overline{i})(b,c) \in \smallflipmispic b a b a$, then $i'' = h'[i,\overline{i}]$ satisfies $i''(b,c)=i''(c,b)=a$ and $i''(a,c)=i''(c,a)=b$.
Since $\{a,c\}\not\in\Gamma$ there is some binary $p \in \pol(\Gamma)$ \st $p(a,c)=b$.
This means that $i'=p[i'',h'] \in \doa c b$ and $i'(a,c)=i'(c,a)=b$.
\item
Otherwise $i \in \doa c b$.
Here $i'=g[i,\overline{i}]$ satisfies $i' \in \doa c b$ and $i'(a,c)=i'(c,a)=b$.
\ei
Since $i'$ must preserve $R$ it also holds that $i' \in \upa b a$.
By $i',h'$ we see that $(\Gamma,\nu)$ is of type \typebsm\@.

\section {Proof of \lemref{lem:fpol}}
\label{s:lem:fpol}
We need the following variant of Motzkin's Transposition Theorem that is easy to derive from \thmref{thm:motzkin}.
\begin{theorem}
\label{thm:motzkin:2}
For any $A \in \bbbq^{m \times n}$ and $B \in \bbbq^{p \times n}$, exactly one of the following holds:
\bi
\item $Ax \le 0$, $Bx < 0$ for some $x \in \qplus^n$
\item $A^Ty + B^Tz \ge 0$ and $z \ne 0$ for some $y \in \qplus^m$ and $z \in \qplus^p$
\ei
\end{theorem}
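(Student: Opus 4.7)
The plan is to derive \thmref{thm:motzkin:2} from the general Motzkin Transposition Theorem (\thmref{thm:motzkin}) by absorbing the constraint $x \in \qplus^n$ into the linear inequality system. Specifically, I would form the $(m+n)\times n$ block matrix $\tilde A$ whose first $m$ rows coincide with those of $A$ and whose last $n$ rows form $-I_n$. Then the first alternative of \thmref{thm:motzkin:2} is equivalent to the existence of $x \in \bbbq^n$ with $\tilde A x \le 0$ and $Bx < 0$, since the block $-I_n x \le 0$ is precisely the nonnegativity constraint $x \ge 0$, which removes the need for the restriction $x \in \qplus^n$.

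Next, I would invoke \thmref{thm:motzkin} on $\tilde A$ and $B$ with zero right-hand sides $b = 0 \in \bbbq^{m+n}$ and $c = 0 \in \bbbq^p$. Because $b$ and $c$ vanish, the quantity $b^T \tilde y + c^T z$ is identically zero, so the strict-inequality branch of the alternative cannot be triggered and only the ``$=0$ together with $z \ne 0$'' branch survives. The alternative thus asserts the existence of $\tilde y \in \qplus^{m+n}$ and $z \in \qplus^p$ satisfying $\tilde A^T \tilde y + B^T z = 0$ and $z \ne 0$.

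Finally, I would split $\tilde y = (y, w)$ with $y \in \qplus^m$ and $w \in \qplus^n$. Since $\tilde A^T = [A^T \mid -I_n]$, the equality $\tilde A^T \tilde y + B^T z = 0$ becomes $A^T y + B^T z = w$, and as $w$ ranges freely over $\qplus^n$ this is equivalent to the single inequality $A^T y + B^T z \ge 0$. The $z \ne 0$ clause is preserved verbatim, so the alternative matches exactly the second condition of \thmref{thm:motzkin:2}, and the exclusivity of the two cases transfers automatically from \thmref{thm:motzkin}. I do not anticipate a real obstacle here; the only subtlety to verify is that the nonnegativity-encoding rows appended to $\tilde A$ contribute only the ``slack'' vector $w$ and do not interact with the $z \ne 0$ requirement, which is confined to the $B$-block.
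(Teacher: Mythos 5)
Your derivation is correct and complete. The paper itself does not spell out a proof of \thmref{thm:motzkin:2}, merely noting that it is ``easy to derive from \thmref{thm:motzkin}''; the argument you give is precisely the standard way to make that derivation explicit — encode the sign constraint $x \ge 0$ as the extra inequality block $-I_n x \le 0$, set $b=0$ and $c=0$ so that the $b^Ty + c^Tz < 0$ branch of Motzkin's alternative is vacuous, and then observe that the slack variables $w$ coming from the $-I_n$ block turn the equality $A^Ty + B^Tz = w$ into the inequality $A^Ty + B^Tz \ge 0$, with $z \ne 0$ unaffected. All steps (including the equivalence ``$\exists\,w \in \qplus^n$ with $A^Ty + B^Tz = w$'' iff ``$A^Ty + B^Tz \ge 0$'', and the transfer of exclusivity) check out.
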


Let $\sigma : D \to \qplus \cup \{\infty\}$ be a function in $\eclose{\Gamma,\Delta}$ and for $\nu \in \Delta$ let $D_{\nu} = \{x \in D : \nu(x)<\infty\}$.
Let $\Omega_1 = \{f \in \binopers \cap \pol(\Gamma) : \nu(f(x,y))<\infty \text{ for every } \nu \in \Delta \text{ and } x,y \in D_{\nu}\}$ and $\Omega_2 = \{f \in \Omega_1 : \{f(a,b),f(b,a)\} \ne \{a,b\} \text{ and } \sigma(f(a,b))+\sigma(f(b,a)) \le \sigma(a)+\sigma(b) \}$.
Assume there are $p_{\nu,(x,y)} \in \qplus$ for $\nu \in \Delta$ and $(x,y) \in D_{\nu}^2$ \st 
\bna
\label{ineq:1}
&\ssum{\nu \in \Delta, (x,y) \in D_{\nu}^2} p_{\nu,(x,y)} \nu(g(x,y))
\ge \ssum{\nu \in \Delta, (x,y) \in D_{\nu}^2} p_{\nu,(x,y)}  \nu(\pr_i(x,y)), \quad g \in \Omega_1, i\in[2],\\
\label{ineq:2}
&\ssum{\nu \in \Delta, (x,y) \in D_{\nu}^2} p_{\nu,(x,y)} \nu(g(x,y))
>   \ssum{\nu \in \Delta, (x,y) \in D_{\nu}^2} p_{\nu,(x,y)}  \nu(\pr_i(x,y)), \quad g \in \Omega_2, i\in[2].
\ena
We will show that in this case we have $\smallcrosspic{a}{b}{a}{b} \in \wclose{\Gamma,\Delta}$.
Create an instance $I$ of \minhom$(\Gamma,\Delta)$ with variables $D^2$, and objective 
\ba
m(\phi) &= \ssum{\nu \in \Delta, (x,y) \in D_{\nu}^2} p_{\nu,(x,y)} \nu(\phi(x,y)) + \epsilon \ssum{\nu \in \Delta, (x,y) \in D_{\nu}^2} \nu(\phi(x,y)),
\ea
where $\epsilon > 0$ is choosen small enough so that $\phi \in \argmin_{\phi' \in \Omega_1} m(\phi')$ implies
$\phi \in \argmin_{\phi' \in \Omega_1} \sum_{\nu \in \Delta, (x,y) \in D_{\nu}^2} p_{\nu,(x,y)} \nu(\phi(x,y))$.
Such a number $\epsilon$ can always be found.
Note that a solution $\phi$ to $I$ with finite measure is a function $D^2 \to D$ \st $\nu(\phi(x,y))<\infty$ for every $\nu \in \Delta$ and $(x,y) \in D_{\nu}^2$.

Pick, for every $g \in \binopers \setminus \pol(\Gamma)$, a relation $R_g \in \Gamma$ \st $g$ does not preserve $R_g$.
Add for each pair of tuples $t^1,t^2 \in R_g$ the constraint $(( (t^1_1,t^2_1), \dots, (t^1_{\ar R_g},t^2_{\ar R_g}) ), R_g)$.
Now a solution to $I$ is a binary polymorphism of $\Gamma$.
Hence, if $\phi$ is a solution to $I$ with finite measure, then $\phi \in \Omega_1$.
Clearly $\pr_1$ and $\pr_2$ satisfies all constraints and are solutions to $I$ with finite measure.
By \eqref{ineq:1} the projections are also optimal solutions.
By \eqref{ineq:2} any function in $\Omega_2$ has a larger measure than $\pr_1$ or $\pr_2$.
Hence, if $P = \{(\phi(a,b),\phi(b,a)): \phi \in \optsol(I)\}$ we have $\argmin_{(x,y)\in P} (\sigma(x)+\sigma(y)) = \smallcrosspic a b a b$, and therefore $\smallcrosspic a b a b \in \wclose{\Gamma,\Delta}$.

This of course means that the system \eqref{ineq:1}+\eqref{ineq:2} can not be satisfied.
We can write \eqref{ineq:1}+\eqref{ineq:2} as
\begin{flalign*}
&& \ssum{\nu \in \Delta, (x,y) \in D_{\nu}^2} p_{\nu,(x,y)} (\nu(\pr_i(x,y))-\nu(g(x,y))) &\le 0, &\mathllap{g \in\Omega_1, i \in [2],}\\
&& \ssum{\nu \in \Delta, (x,y) \in D_{\nu}^2} p_{\nu,(x,y)} (\nu(\pr_i(x,y))-\nu(g(x,y))) &<   0, &\mathllap{g \in\Omega_2, i \in [2]}.
\end{flalign*}
If this system lacks a solution $p_{\nu,(x,y)} \in \qplus$ for $\nu \in \Delta, (x,y) \in D_{\nu}^2$, then by \thmref{thm:motzkin:2}, there are $z_{i,j,g} \in \qplus$ for $i,j \in [2]$, $g \in \Omega_1$ \st 
\begin{flalign*}
&& \ssum{i \in [2], j \in [2], g \in \Omega_j} (\nu(\pr_i(x,y))-\nu(g(x,y))) z_{i,j,g} &\ge 0, &\mathllap{\nu \in \Delta, (x,y) \in D_{\nu}^2,}
\end{flalign*}
where $z_{i,2,g} > 0$ for some $i \in [2], g \in \Omega_2$.
So
\begin{flalign*}
&& \ssum{i \in [2], j \in [2], g \in \Omega_j} (\nu(x)+\nu(y)-\nu(g(x,y))-\nu(g(y,x))) z_{i,j,g} &\ge 0,\quad\quad\quad\quad\quad &\mathllap{\nu \in \Delta, (x,y) \in D_{\nu}^2,}
\end{flalign*}
and with $z_{j,g} = z_{1,j,g} + z_{2,j,g}$ we have
\begin{flalign*}
&& \ssum{j \in [2], g \in \Omega_j} (\nu(x)+\nu(y)-\nu(g(x,y))-\nu(g(y,x))) z_{j,g} &\ge 0,\quad\quad\quad\quad\quad &\mathllap{\nu \in \Delta, (x,y) \in D_{\nu}^2,}
\end{flalign*}
Let $z'_{j,g} = z_{j,g} + z_{j,\overline g}$ and note that
\begin{flalign*}
&& \ssum{j \in [2], g \in \Omega_j} (\nu(x)+\nu(y)-\nu(g(x,y))-\nu(g(y,x))) z'_{j,g} &\ge 0,\quad\quad\quad\quad\quad &\mathllap{\nu \in \Delta, (x,y) \in D_{\nu}^2,}
\end{flalign*}
and therefore, since $z'_{j,g} = z'_{j,\overline g}$ for $j \in [2]$,
\begin{flalign*}
&& \ssum{j \in [2], g \in \Omega_j} (\nu(x)+\nu(y)-2\nu(g(x,y))) z'_{j,g} &\ge 0, &\mathllap{\nu \in \Delta, (x,y) \in D_{\nu}^2,}
\end{flalign*}
Note that by construction $z'_{2,g} > 0$ for some $g \in \Omega_2$.
We can rewrite this system into
\begin{flalign*}
&& \ssum{g \in \binopers} \omega(g) \nu(g(x,y)) &\le \frac{1}{2} ( \nu(x) + \nu(y) ), &\mathllap{\nu \in \Delta, (x,y) \in D_{\nu}^2,}
\end{flalign*}
by defining $\omega : \binopers \to \qplus$ as
\ba
\omega(g) = 
\frac{ \chi_{\Omega_1}(g) 2z'_{1,g} + \chi_{\Omega_2}(g) 2z'_{2,g} }
     {\sum_{g \in \binopers} (\chi_{\Omega_1}(g) 2z'_{1,g} + \chi_{\Omega_2}(g) 2z'_{2,g} )}.%
\footnotemark
\ea%
\footnotetext{Here $\chi_\Omega : \binopers \to \{0,1\}$ is the indicator function for the set $\Omega$.}
Clearly $\omega \in \fpol(\Gamma,\Delta)$ and $\omega(g) > 0$ for some $g \in \Omega_2$.

\section {Proof of \lemref{lem:hasconst}}
\label{s:hasconst}

Let $D = \{a,b,c\}$ and $\nu : D \to \qplus$ be \st $\nu(a) < \nu(b) < \nu(c)$.
We assume that $(\Gamma,\nu)$ is a min-core.
Let:
\ba
\gamma_1 = \mypic{ \makeline 2 0 \makeline 0 2 },\,
\gamma_2 = \mypic{ \makeline 1 0 \makeline 0 1 },\,
\gamma_3 = \mypic{ \makeline 1 2 \makeline 2 1 },\,
\gamma_4 = \mypic{ \makeline 2 0 \makeline 1 2 },\,
\gamma_5 = \mypic{ \makeline 1 0 \makeline 0 2 },\,
\gamma_6 = \mypic{ \makeline 2 0 \makeline 1 1 },\,
\gamma_7 = \mypic{ \makeline 2 0 \makeline 1 1 \makeline 0 2 }.
\ea

\begin{lemma}
\label{lem:hascross}
If $(\Gamma,\nu)$ is not of type \typegmc, then $\gamma_i \in \wclose{\Gamma,\nu}$, for some $i \in [7]$.
\end{lemma}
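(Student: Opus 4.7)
My plan is to argue by contrapositive: assume $\gamma_i\notin\wclose{\Gamma,\nu}$ for every $i\in[7]$, and construct a polymorphism $f\in\pol(\Gamma)$ witnessing that $(\Gamma,\nu)$ is of type \typegmc. Since $\nu$ is injective with $\nu(a)<\nu(b)<\nu(c)$, the \typegmc\ condition on a binary $f$ reduces to the concrete list: $f(a,a)=a$; $f(a,b)=f(b,a)=a$; $f(b,b)\in\{a,b\}$; $f(a,c),f(c,a)\in\{a,b\}$; and $(f(b,c),f(c,b))\notin\{(b,c),(c,b),(c,c)\}$.

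I would formulate the existence of a fractional polymorphism of $(\Gamma,\nu)$ supported only on operations meeting this list as a linear program, in the same spirit as the proofs of \propref{prop:use:dom} and \lemref{lem:fpol}. Applying \thmref{thm:motzkin} to its dual, any infeasibility certificate yields nonnegative weights $\{p_{\nu,(x,y)}\}$ that assemble, via the indicator-gadget construction of \lemref{lem:fpol}, into an instance of \minhom$(\Gamma,\nu)$ whose set of optimal tuples is a binary relation $R$ on $D$. The goal is then to show that every such $R$ equals one of the seven $\gamma_i$'s, contradicting the standing assumption.

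The enumeration is guided by interpreting each $\gamma_i$ as a minimal dual obstruction. The relations $\gamma_1,\gamma_2,\gamma_3$ -- the three pair-swaps -- arise from dual witnesses internal to a single pair of $D$, each blocking the corresponding single-pair \typegmc\ condition. The relations $\gamma_4,\gamma_5$ arise from dual witnesses mixing the $\{a,c\}$- and $\{b,c\}$-constraints; $\gamma_6$ from witnesses tying the constraint $f(b,b)\in\{a,b\}$ to the constraint $f(c,a)\in\{a,b\}$; and $\gamma_7$ from the degenerate case $2\nu(b)=\nu(a)+\nu(c)$, where the \typegmc\ condition on $\{a,c\}$ can be realized only at the bisubmodular-shaped tuple $(b,b)$, producing the three-tuple obstruction $\{(c,a),(b,b),(a,c)\}$. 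Non-definability of every $\gamma_i$ eliminates all dual shapes, so the LP is feasible, and the standard composition tricks $f[g,\overline g]$ and $f[g,h]$ from \sectref{s:proof:1} convert the resulting fractional polymorphism into a single $f\in\pol(\Gamma)$ satisfying the list above.

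The main obstacle is the dual enumeration: one must show that, after the usual rationalization and symmetrization $\omega(g)\mapsto\tfrac12(\omega(g)+\omega(\overline g))$, the projection of every minimal Motzkin infeasibility certificate onto its set of optimal tuples lands in exactly one of the seven $\gamma_i$'s, and not in some other relation on $D$. This is a finite but delicate case analysis, parallel in spirit to the proofs of \lemref{lem:fpol} and \thmref{thm:minsol}, but specialized to the multi-constraint \typegmc\ LP rather than the single-pair LP of \lemref{lem:fpol}.
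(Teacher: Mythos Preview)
Your plan diverges substantially from the paper's, and as written it has a structural gap. The condition you need is purely existential and discrete --- does $\pol(\Gamma)$ contain \emph{one} operation meeting your list --- whereas the LP you propose adds the fractional-polymorphism inequalities $\sum_g\omega(g)\nu(g(x,y))\le\tfrac12(\nu(x)+\nu(y))$, which are extraneous to \typegmc. In the case that actually matters, $\pol(\Gamma)$ contains no operation meeting the list at all; your LP then has an empty variable set and the normalisation $\sum_g\omega(g)=1$ degenerates to $0=1$, so Motzkin's certificate carries no weights $p_{(x,y)}$ from which to build the indicator-gadget instance. Even reformulating in the style of \lemref{lem:fpol} (primal on the weights, dual on the polymorphisms) does not obviously help: there the forbidden set $\Omega_2$ is tied to a \emph{single} pair and a \emph{single} auxiliary cost $\sigma$, and the resulting optimal set is automatically a binary relation on that pair. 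Your list is a conjunction of constraints across all three pairs plus the diagonal, so the analogous optimal set lives in a much higher-arity relation, and there is no evident reason its two-coordinate projections must land among exactly $\gamma_1,\dots,\gamma_7$. The enumeration you flag as an obstacle is therefore not a routine case analysis but the entire content of the lemma.

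The paper's argument is short and stays on the relational side. It invokes the known fact that if $(\Gamma,\nu)$ is of type \typegmc\ then every $R\in\Gamma$ contains its tuple of coordinatewise $\nu$-minima; so failure of \typegmc\ yields a witness $R$ lacking that tuple. An induction on $\ar(R)$ reduces to the binary case, where one writes down the explicit weight $\alpha=(\nu(q_2)-\nu(w_2))/(\nu(q_1)-\nu(w_1))$ (with $w_i$ the $\nu$-minimum of $\pr_i(R)$ and $q_i$ the $\nu$-minimum compatible with $w_{3-i}$) and checks that $\argmin_{(x,y)\in R}(\alpha\nu(x)+\nu(y))$, or its inverse, is one of the seven $\gamma_i$'s. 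No duality is used.
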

\begin{proof}
We will make use of the following fact~\cite[Lemma~5.6]{maxonesgen}:
If $(\Gamma,\nu)$ is of type \typegmc, then for every $R \in \Gamma$ we have $(\min_\nu \pr_1(R), \dots, \min_\nu \pr_{\ar(R)}(R) ) \in R$.
We say that a relation $R$ is generalised min-closed if this property is satisfied.

If $\Gamma$ is not of type \typegmc, then there is $R \in \Gamma$ that is not generalised min-closed.
Consider first the case when $\ar(R)=2$.
Let $w_1$ be the $\nu$-minimal element in $\pr_1(R)$, and $w_2$ be the $\nu$-minimal element in $\pr_2(R)$.
Let $q_1$ be the $\nu$-minimal element in $\{ x : (x,w_2) \in R\}$, and $q_2$ be the $\nu$-minimal element in $\{ y : (w_1,y) \in R\}$.
Set $\alpha = \frac{\nu(q_2)-\nu(w_2)}{\nu(q_1)-\nu(w_1)}$.
Now either $\argmin_{(x,y) \in R} \alpha \nu(x) + \nu(y)$ or its inverse is one of $\gamma_1,\dots,\gamma_7$.
This establishes the claim for $\ar(R)=2$.
Assume it holds also for every relation $R$ with $\ar(R)<m$.
Let $R_i = \{ (x_1,\dots,x_m) \in R : x_i \text{ is $\nu$-minimal in } \pr_i(R) \}$.
If $R_i$ is not generalised min-closed for some $i \in [m]$, then $\pr_{[m]-i}(R)$ is not generalised min-closed, so the result follows from the inductive hypothesis.
Otherwise $(q,w_2,\dots,w_m), (w_1,r,w_3,\dots,w_m) \in R$ where $w_i$ is the $\nu$-minimal element in $\pr_i(R)$ and $q$, $r$ are not $\nu$-minimal elements in $\pr_1(R)$ respective $\pr_2(R)$.
This means that $P = \{ (x,y) : (x,y,z_3,\dots,z_m) \in R \text{ and $z_i$ is $\nu$-minimal in } \pr_i(R)\}$ is not generalised min-closed, so again, the result follows from the inductive hypothesis.
\hfill\qed
\end{proof}

By \lemref{lem:hascross} we know that $\gamma_k \in \wclose{\Gamma,\nu}$, for some $k \in [7]$.
This immediately yields two constants: $\pr_1(\argmin_{(x,y) \in \gamma_k} \nu(x))$ and $\pr_1(\argmin_{(x,y) \in \gamma_k} \nu(y))$.

We may \wlg assume that one of these constants is $a$ since we always have $\argmin_{x \in D} \nu(x) = \{a\} \in \wclose{\Gamma,\nu}$.
Assume that the second constant is $c$ (the arguments for the other case is analogous).
Let $f = \{ a \mapsto a, b \mapsto a, c \mapsto c \}$.
Since $\Gamma$ is a min-core, we have $f \not\in \pol(\Gamma)$.
We must therefore have a wittiness $P \in \Gamma$ \st $t \in P$ but $f(t) \not\in P$.
Let $P' = \{ x : (z_1,\dots,z_{\ar(P)}) \in P \}$ where $z_i = c$ if $t_i=c$, $z_i = a$ if $t_i=a$ and $z_i = x$ otherwise.
Clearly $\{b\} = \argmin_{x \in P'} \nu(x)$.

\end{document}